\documentclass[twoside,leqno,twocolumn]{article}
\usepackage{amsthm}
\newtheorem{theorem}{Theorem}[section]
\newtheorem{lemma}{Lemma}[section]

\newtheorem{proposition}{Proposition}[section]

\newtheorem{Definition}{Definition}[section]

\usepackage{framed}
\usepackage{enumerate}
\usepackage{xspace}
\usepackage{color,graphicx}
\newcommand{\set}[1]{\{#1\}}
\usepackage{amsmath,amssymb}
\usepackage[small,compact]{titlesec}
 
 \usepackage[small,it]{caption}

 \newcommand{\squishlist}{
 \begin{list}{$\bullet$}
  { \setlength{\itemsep}{0pt}
     \setlength{\parsep}{3pt}
     \setlength{\topsep}{3pt}
     \setlength{\partopsep}{0pt}
     \setlength{\leftmargin}{1.5em}
     \setlength{\labelwidth}{1em}
     \setlength{\labelsep}{0.5em} } }
\newcommand{\squishlisttwo}{
 \begin{list}{$\bullet$}
  { \setlength{\itemsep}{0pt}
     \setlength{\parsep}{0pt}
    \setlength{\topsep}{0pt}
    \setlength{\partopsep}{0pt}
    \setlength{\leftmargin}{2em}
    \setlength{\labelwidth}{1.5em}
    \setlength{\labelsep}{0.5em} } }
\newcommand{\squishend}{
  \end{list}  }
\newcounter{Lcount}
\newcommand{\squishenum}{
\begin{list}{\arabic{Lcount}. }
{ \usecounter{Lcount}
\setlength{\itemsep}{0pt}
\setlength{\parsep}{0pt}
\setlength{\topsep}{0pt}
\setlength{\partopsep}{0pt}
\setlength{\leftmargin}{2em}
\setlength{\labelwidth}{1.5em}
\setlength{\labelsep}{0.5em} } }

\DeclareMathOperator{\act}{active}

\DeclareMathOperator{\dist}{dist}

\DeclareMathOperator{\len}{length}

\DeclareMathOperator{\OPT}{OPT}
\newcommand{\piall}{\pi^{\text{\textup{all}}}}
\newcommand{\piin}{\pi^{\text{\textup{in}}}}
\newcommand{\piout}{\pi^{\text{\textup{out}}}}
\usepackage{hyperref}
\begin{document}
\title{An efficient polynomial-time approximation scheme for Steiner forest in planar
  graphs} 

\author{David Eisenstat\thanks{Department of Computer Science, Brown University.  Supported in part by NSF grant CCF-0964037.}
        \and 
        Philip Klein\footnotemark[1] \and Claire Mathieu\footnotemark[1]
}
\date{}
\maketitle 
\begin{abstract} 
We give an $O(n \log^3 n)$ approximation scheme for Steiner forest in
planar graphs, improving on the previous approximation scheme for this
problem, which runs in $O(n^{f(\epsilon)})$ time.
\end{abstract}
\setcounter{page}{0}
\newpage
\section{Introduction}

In the {\em Steiner forest} problem, we are given an undirected graph $G$ with
edge-lengths and a set $\mathcal D$ of pairs $(s_i,t_i)$ of vertices.  The
pairs are called {\em demands}, and the vertices that appear in demands are
called {\em terminals}.  The goal is to find a minimum-length forest $F$  that,
for every demand $(s_i, t_i)$, contains a path in $F$ from $s_i$ to $t_i$.
This problem generalizes the Steiner tree problem in networks.

There is a polynomial-time 2-approximation algorithm~\cite{AKR95}, but the
problem doesn't have an approximation scheme unless P=NP~\cite{BP89,Thimm01}.
However, for restricted input classes, polynomial-time approximation schemes
have been found.  For the case where the vertices are the points on the plane
and edge-lengths are Euclidean distances, Borradaile, Klein, and
Mathieu~\cite{BKM08} give an approximation scheme that can be implemented in
$O(n \log n)$ time where $n$ is the number of terminals.

For planar graphs, Bateni, Hajiaghayi, and Marx~\cite{BateniHM10} give a
polynomial-time approximation scheme.  The running time, however, for obtaining
a $(1+\epsilon)$-approximate solution has the form $n^{\epsilon^{-c}}$.  The
degree of the polynomial grows as $\epsilon$ gets smaller.  An {\em efficient}
polynomial-time approximation scheme is an approximation scheme whose running
time has the form $O(f(\epsilon) n^c)$ for some function $f$ and some constant
$c$ independent of $\epsilon$.  Thus the approximation scheme of Bateni,
Hajiaghayi, and Marx is not an efficient PTAS in this sense.  Our main result
is an efficient PTAS:

\begin{theorem} \label{thm:main}
For planar Steiner forest, there is an approximation scheme whose running time
is $O(n \log^3 n)$.
\end{theorem}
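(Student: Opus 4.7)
The plan is to follow the three-phase framework that has been successful for related planar network design problems such as Steiner tree and TSP. First, I would build a \emph{light spanner} $H$: a planar subgraph of $G$ whose total edge weight is $O(\OPT/\epsilon)$ and that still contains a $(1+\epsilon)$-approximate Steiner forest. For Steiner tree, Borradaile–Klein–Mathieu provide exactly such a spanner in planar graphs, and I would extend the construction to the pairwise-demand setting, being careful that the ``throwing away short edges'' and mortar-graph steps still support connectivity between every $s_i$ and $t_i$ rather than merely among a single terminal set. This phase should run in $O(n \log n)$ time, matching the Steiner-tree case.

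Second, I would apply Baker's shifting technique to the spanner $H$. Fix a shortest-path layering of $H$ into geodesic rings, pick $k = \Theta(1/\epsilon)$, and for each residue $i \in \{0,\ldots,k-1\}$ consider deleting all edges between layers congruent to $i \pmod k$. Because $H$ weighs only $O(\OPT/\epsilon)$, by averaging there is a choice of $i$ whose deleted edges cost at most $\epsilon\cdot\OPT$. The standard planar structural result of Eppstein/Klein then guarantees that each remaining slab has branchwidth $O(1/\epsilon)$. Adding back the deleted edges for free at the end reconnects split demand pairs without hurting the approximation ratio.

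The third phase, and where I expect the bulk of the technical difficulty to lie, is a dynamic program on each bounded-branchwidth slab. Here Steiner forest differs fundamentally from Steiner tree: a DP state at a separator of width $w = O(1/\epsilon)$ must record not only the connectivity partition of the $w$ portals on the boundary, but also, for every demand pair, whether its endpoints will be joined via the piece being processed or via the outside. Naively enumerating these ``promises'' blows up exponentially in the number of demands touching the piece. The crux of the argument will be to show that only $\mathrm{poly}(1/\epsilon)$ demand equivalence classes are relevant per separator—intuitively, portals in the same connectivity block of the partial solution may be identified for bookkeeping purposes, so the DP need only carry a labelling of partition blocks by ``which demand-side each block still owes.'' Once that structural reduction is in place, the DP runs in $2^{O(1/\epsilon)}\cdot |\text{slab}|$ time, and a careful amortization of the shift loop plus the recursive construction over slabs gives the total $O(n \log^3 n)$ bound claimed in Theorem~\ref{thm:main}. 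The hard part is the bookkeeping lemma that controls the promise-space; everything else is by now a well-rehearsed adaptation of the BKM/BHM machinery.
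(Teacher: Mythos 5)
Your high-level outline matches the Klein-style framework the paper uses, but two of your three phases have genuine gaps, one of them fatal. The decisive problem is the dynamic-programming phase. You propose an \emph{exact} DP on bounded-branchwidth slabs whose state space is controlled by a ``bookkeeping lemma'' reducing the demand promises to $\mathrm{poly}(1/\epsilon)$ equivalence classes. No such lemma can exist: Gassner showed that Steiner forest is NP-hard already on graphs of treewidth~3, so identifying portals within a connectivity block cannot collapse the promise space to polynomial size without losing correctness. The paper circumvents this by \emph{not} solving the bounded-width instances exactly. It exploits the spanner property that $\len(G)=O(\OPT)$ to design an approximate DP with additive error $\epsilon\,\len(G)$: near each cluster boundary, heavy shortest-path neighborhoods are contracted, the remainder is covered by $O(1)$ low-diameter regions per distance scale, and configurations record only which regions contain the relevant terminals (at logarithmically many guessed scales, over a rebalanced log-height branch decomposition). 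The accompanying structural theorem shows the optimum can be patched to have only such ``simple'' canonical configurations at small extra cost. Without some replacement for this idea, your third phase does not go through.

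The spanner phase also hides a real difficulty. The Borradaile--Klein--Mathieu construction produces a spanner of length $f(\epsilon)\len(T)$ relative to a \emph{single tree} $T$ containing the terminals, with error $\epsilon\len(T)$; for Steiner forest the optimum may have many components, so you must first run prize-collecting clustering (Bateni et al.) to partition $\mathcal D$ into groups each spanned by a tree, and then build one spanner per tree. Doing this naively costs $O(kn\log n)$ with $k=\Omega(n)$ trees; the paper's first main contribution is a faster PC-clustering that also outputs subgraphs with $O(\log n)$ overlap so the per-tree spanners can be built in $O(n\log^2 n)$ total. Finally, your thinning step deletes inter-level edges rather than contracting them; deletion does not preserve $\OPT$ of the subinstances (a demand pair may be forced onto an arbitrarily long detour, and ``adding back the deleted edges for free'' does not repair the instance you actually solved), which is why the framework contracts the cheap level class and uncontracts in the lifting step.
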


\section{Techniques}

\subsection{Branchwidth}

\begin{figure*}
\centerline{\includegraphics{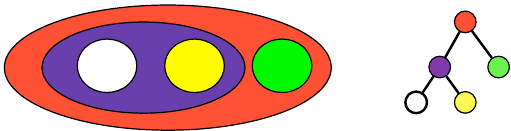}\hfill\includegraphics{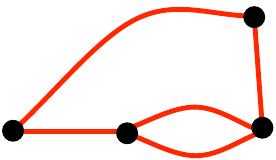}\includegraphics{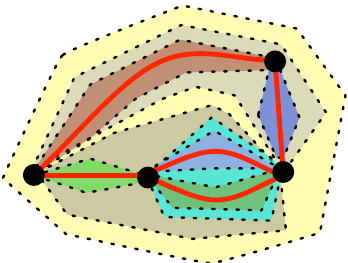}}
\caption{Left: a carving. Right: a branch decomposition.}
\end{figure*}

Tree-decomposition and branch-decomposition are ways to map pieces of the graph
to nodes of a tree so that, loosely speaking, pieces have small overlap.   We
formally define branch-decomposition.  The general paradigm is to reduce the
problem to graphs of bounded tree- or branch-width.

A {\em carving} of a ground set is a maximal family $\mathcal C$ of mutually
noncrossing subsets of the ground set.  In this paper, we refer to the sets in
$\mathcal C$ as {\em clusters}.  The Hasse diagram of the inclusion partial
order on the clusters is a tree in which each node has zero or two children.

A {\em branch-decomposition} of a graph $G$ is a carving $\mathcal C$ of the
edges of $G$.  The {\em boundary} $\partial C$ of a cluster $C$ is the set of
vertices $v$ such that $C$ contains a proper nonempty subset of the edges
incident to $v$.  The {\em width} of a branch-decomposition $\mathcal C$ of $G$
is $\max \set{|\partial C|\ :\ C \in \mathcal C}$.  The {\em branchwidth} of
$G$ is the minimum width over all branch-decompositions of $G$.

Treewidth (not defined here) is within a constant factor of branchwidth.
Graphs of bounded treewidth and branchwidth are tree-like, and many problems
can be solved exactly in linear time on such graphs.  As we will see, this is
{\em not} the case for Steiner forest.

\subsection{Framework}

The approximation scheme of Bateni, Hajiaghayi, and Marx fits into the
framework of Klein~\cite{Klein05b}, which consists of the following steps:
\begin{description}
\item[spanner] Find a subgraph $G_1$ (called the Steiner tree spanner) of the
input graph $G_0$ such that, for constants $c$ and $d$,
\begin{enumerate}
\item $\len(G_1) \leq c \OPT(G_0))$, where $\OPT(G_0)$ is the total length of
the edges used by an optimum solution, and \label{light-prop}
\item $\OPT(G_1) \leq (1+d\epsilon) \OPT(G_0)$.  \label{spanning-prop}
\end{enumerate}
\item[thinning] Partition the edges into $p$ subsets such that the contraction
of any subset yields a graph of branchwidth $O(p)$.  Let $G_2$ be the graph
obtained from $G_1$ by contracting the subset $S$ having the smallest total
length.
\item[dynamic programming] Find an (approximately) optimal solution in $G_2$.
\item[lifting] Lift the solution in $G_2$ to a solution in $G_0$ by
uncontracting edges of $S$ and adding them to the solution as needed.
\end{description}
This presentation of the framework differs from the original in~\cite{Klein05b}
in that, in the original, the dynamic-programming step finds an optimal
solution.

Suppose the solution in the dynamic-programming step has length at most
$(1+c'\epsilon) \OPT(G_2)$.  Since contraction preserves connectivity,
$\OPT(G_2)\le \OPT(G_1)$. By the spanner property, $\OPT(G_1)\leq (1+d\epsilon)
\OPT(G_0)$.  The addition of some edges from $S$ in the lifting step increases
the length by at most $ \len(G_1)/p$.  We choose $p=c/\epsilon$ so the
additional length is at most $\epsilon \OPT(G_0)$.  Hence the length of the
final solution is $((1+c'\epsilon)(1+d\epsilon)+\epsilon) \OPT(G_0)$.

The dynamic-programming step is straightforward and takes linear time; the
construction is given in~\cite{Klein05b}.  (See also~\cite{DHM07,DHK05}.  (It
has been generalized to bounded-genus graphs \cite{DHM07} and, more recently,
to minor-excluded graphs~\cite{DHK05}.) The lifting step is problem-dependent
but  straightforward for the problems (such as TSP, Steiner tree, and Steiner
forest) to which the framework has been successfully applied.  The spanner and
dynamic-programming steps are problem-dependent.  It is in those steps that
Bateni, Hajiaghayi, and Marx~\cite{BateniHM10} introduced new techniques, and
it is there that our improvements go.

\subsection{Spanner}

The spanner step of Bateni et al.\ built on the same step in the Steiner tree
PTAS of Borradaile, Klein, and Mathieu~\cite{BKK07,BorradaileKleinMathieu2009}.
The proof of the latter can be adapted to show the following.  For an instance
of Steiner forest consisting of a graph $G$ and a set $\mathcal D$ of terminal
pairs, let $\OPT(G, \mathcal D)$ denote the optimum value.

\begin{lemma}[Borradaile et al., adapted] \label{lem:BKM-spanner}
For a number $\epsilon>0$,  a planar graph $G_0$, and a tree $T$ of $G_0$,
there is a subgraph $H$ of length $f(\epsilon) \len(T)$ such that, for any set
$\mathcal D$ of pairs of terminals belonging to $T$, $\OPT(H, \mathcal D) \leq
\OPT(G_0, \mathcal D)+\epsilon \len(T)$, where $f(\cdot)$ is a  fixed function.
Furthermore, $H$ can be constructed in $O(n \log n)$ time for fixed $\epsilon$.
\end{lemma}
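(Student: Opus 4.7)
The plan is to lift the Borradaile--Klein--Mathieu Steiner-tree spanner construction to the Steiner-forest setting by running it with $T$ playing the role of the backbone tree. First I would build $H$ by running the BKM construction on $G_0$: form a mortar graph by relaxing $T$ using approximately shortest paths in $G_0$ between consecutive vertices along an Euler-style traversal of $T$, cut the mortar graph into bricks, and add $\Theta(1/\epsilon)$ portal paths into each brick. The length analysis of the construction is structural: it only uses that $T$ is a tree spanning the relevant terminal set, not that $T$ is close to any Steiner optimum. Consequently the resulting subgraph $H$ satisfies $\len(H)\le f(\epsilon)\len(T)$, and the implementation from~\cite{BKK07,BorradaileKleinMathieu2009} runs in $O(n\log n)$ for fixed $\epsilon$.

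For the approximation claim, given a demand set $\mathcal{D}$ let $F^*$ be an optimum Steiner forest in $G_0$ and decompose it into its connected components $F_1^*,\dots,F_k^*$, so that $\OPT(G_0,\mathcal{D})=\sum_i\len(F_i^*)$. I would apply the BKM per-brick rounding procedure to each $F_i^*$ independently: inside every brick $B$ that $F_i^*$ crosses, reroute the portion of $F_i^*$ lying in $B$ through the portals of $B$, using the brick-boundary paths and portal-to-portal shortest paths that lie in $H$. Splicing the rerouted fragments together through the mortar graph yields a feasible Steiner forest in $H$ that still connects every demand pair, since connectivity is preserved within each component and the terminals live on $T$.

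The error analysis then aggregates per-brick costs. The BKM per-brick lemma shows that the extra length incurred when rerouting a single tree fragment inside a brick $B$ is $O(\epsilon)\len(\partial B)$. When several fragments of the optimum forest enter the same $B$, only a constant number of fragments can do so (bounded by the constant-size boundary of $B$), and each rounds at additive cost $O(\epsilon)\len(\partial B)$. Summing over bricks and using that the brick boundaries charge to the mortar graph, whose length is $O(\len(T))$, the total additive cost is $O(\epsilon)\len(T)$. Rescaling $\epsilon$ by the hidden constant yields $\OPT(H,\mathcal{D})\le\OPT(G_0,\mathcal{D})+\epsilon\len(T)$.

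The principal obstacle is precisely this last point. The BKM proof was originally written for a single connected Steiner tree rounded brick-by-brick, whereas here several disjoint fragments of a Steiner forest may share a brick and must all be routed through the same fixed set of portals. I would need to verify that the per-brick bound degrades only by a constant factor in this multi-fragment setting, so that the global summation stays $O(\epsilon)\len(T)$ rather than blowing up with the number of demand components. Once this minor strengthening of the per-brick argument is in hand, the rest of the proof is a direct transcription of the BKM analysis with $T$ in place of an approximate Steiner tree.
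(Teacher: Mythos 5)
The paper does not actually prove this lemma: it states it as ``Borradaile et al., adapted'' and defers entirely to the Steiner-tree spanner construction of~\cite{BKK07,BorradaileKleinMathieu2009}, asserting only that ``the proof of the latter can be adapted.'' Your proposal is exactly the intended adaptation --- run the mortar-graph/brick/portal construction with $T$ as the backbone, observe that the length bound $f(\epsilon)\len(T)$ and the $O(n\log n)$ running time are inherited unchanged, and then round the optimal forest brick by brick --- so in outline you match what the authors have in mind.

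One step of your error analysis is wrong as stated, though you partly flag it yourself. You justify the multi-fragment case by saying ``only a constant number of fragments can [enter the same brick] (bounded by the constant-size boundary of $B$).'' A brick's boundary consists of a constant number of \emph{mortar paths}, not a constant number of vertices, so arbitrarily many components of the optimal forest can cross the boundary of a single brick; there is no a priori constant bound on fragments per brick. The correct route (the one Bateni et al.\ take when they adapt BKM) is to apply the BKM structure theorem to the \emph{union} of all fragments inside each brick: the crossing-reduction step first cuts the total number of joining vertices on each brick side down to $O(1/\epsilon)$ at additive cost $O(\epsilon)\len(\partial B)$, and this operation is agnostic to how many components contribute those crossings. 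The only property one must check for the forest setting is that the modification never \emph{decreases} connectivity between boundary crossing points (it only adds boundary segments and reroutes through portals), so no demand pair is disconnected even if distinct components get merged. With that substitution your per-brick charging to the mortar graph, and hence the $O(\epsilon)\len(T)$ total, goes through; the ``minor strengthening'' you defer is real but is resolved by reorganizing the argument around crossings rather than fragments, not by bounding the number of fragments.
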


To use this result, Bateni et al.\ introduced an algorithm called {\em
prize-collecting (PC) clustering}.

\begin{theorem}[Bateni et al.]
There is a polynomial-time algorithm that, given a number $\epsilon>0$ and a
(not necessarily planar) Steiner-forest instance $(G, \mathcal D)$, outputs a
partition ${\cal D}_1 \cup \dots \cup {\cal D}_\ell$ of $\cal D$ and
corresponding trees  $T_1, \ldots, T_\ell$ such that
\begin{enumerate}
\item the terminals comprising ${\cal D}_i$ belong to the tree $T_i$,
\item $\sum_{i=1}^\ell \len(T_i) \leq (\frac{4}{\epsilon}+2) \OPT(G, {\cal
D})$, and
\item $\sum_{i=1}^\ell \OPT(G, {\cal D}_i) \leq (1+\epsilon) \OPT(G, {\cal
D})$.
\end{enumerate}
\end{theorem}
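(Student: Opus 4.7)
The plan is to reduce the task to prize-collecting Steiner forest (PCSF) and apply a Goemans--Williamson-style primal-dual $2$-approximation, reading the clusters off the PCSF output. Assign to each demand $i=(s_i,t_i)$ a penalty $\pi_i=\Theta(\epsilon)\,\dist_G(s_i,t_i)$ and run PCSF on $(G,\mathcal D,\pi)$ to obtain a forest $F$ and a rejected set $R\subseteq\mathcal D$ satisfying
\[
\len(F)+\sum_{i\in R}\pi_i\;\le\;2\,\OPT_{\text{PC}}\;\le\;2\,\OPT(G,\mathcal D),
\]
the last step because the unmodified optimum is PC-feasible at zero penalty.

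I would then define clusters directly from $(F,R)$. For each connected component $K$ of $F$, form a cluster $\mathcal D_K$ of the non-rejected pairs with both endpoints in $K$, and let $T_K$ be a spanning tree of $K$. For each rejected pair $i\in R$, form a singleton cluster $\{i\}$ with $T_i$ a shortest $s_i$-$t_i$ path in $G$. Property~1 is immediate by construction. Property~2 follows from summing $\sum_K\len(T_K)\le\len(F)\le 2\,\OPT$ and $\sum_{i\in R}\len(T_i)=\Theta(1/\epsilon)\sum_{i\in R}\pi_i\le\Theta(1/\epsilon)\,\OPT$; tuning the constant inside $\Theta(\epsilon)$ yields $(4/\epsilon+2)\,\OPT$ exactly.

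Property~3 is the main obstacle: the sub-instance optima must almost sum to $\OPT(G,\mathcal D)$, not merely be $O(1)\,\OPT$. My plan is to fix an optimal forest $F^{*}$ for $(G,\mathcal D)$ and, for each cluster $\mathcal D_K$, exhibit a feasible solution equal to the union over $F^{*}$-components $C^{*}_j$ of the Steiner subtree of $C^{*}_j$ on the terminals of $\mathcal D_K$ lying in $C^{*}_j$; this is feasible since every pair in $\mathcal D_K$ lies inside a single $F^{*}$-component. Summed over clusters, each $F^{*}$-edge is then charged once per cluster that ``straddles'' its $F^{*}$-component, and each singleton $i\in R$ adds an extra $\dist_G(s_i,t_i)$. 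To reach $(1+\epsilon)\,\OPT$, I would charge the excess copies and the singleton paths to PC dual moats: every two clusters straddling a common $F^{*}$-edge were separated by PC moat growth, and each rejected-singleton path is paid by its penalty moat, so the $\Theta(\epsilon)$ scaling of $\pi_i$ converts the $O(\OPT)$ dual mass into $O(\epsilon)\,\OPT$.

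The step I expect to be hardest is precisely this dual-charging argument, since the aggregate PCSF guarantee bounds $\sum_{i\in R}\pi_i$ only by $2\,\OPT$ rather than by the $\Theta(\epsilon)\,\OPT$ needed to close the $(1+\epsilon)$ gap. Making the bound tight will likely require either a finer PCSF variant (such as Hajiaghayi--Jain) whose duals directly bound the cost of any integer feasible solution restricted to each demand subset, or an iteration of PCSF at two different penalty scales so that the small-scale rejections are absorbed into the large-scale clusters. Establishing the edge-by-edge coupling between the PC moats and the structure of the unseen $F^{*}$ is where I expect the real technical work to lie.
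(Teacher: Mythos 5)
You have correctly identified the crux yourself: your construction does not establish property~3, and the gap is not just a missing calculation but a structural problem with the reduction. Scaling the penalties down to $\pi_i=\Theta(\epsilon)\dist_G(s_i,t_i)$ makes rejection cheap, and the rejected pairs then become singleton clusters whose sub-instance optima you cannot control. Concretely, take $k=1/\epsilon$ demands whose sources all lie within distance $o(D)$ of each other and whose sinks likewise, with the two groups at distance $D$; then $\OPT\approx D$, the total penalty is about $k\cdot\epsilon D= D$, so the prize-collecting solver may legitimately reject everything, and your singleton clusters give $\sum_i\OPT(G,\{i\})\approx kD=\OPT/\epsilon$, not $(1+\epsilon)\OPT$. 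The same sharing problem afflicts the non-rejected clusters: two clusters $\mathcal D_K$, $\mathcal D_{K'}$ can both need the same long edge of the optimal forest, and nothing in your charging scheme prevents that edge from being paid for once per cluster. The dual moats of your PCSF run only certify a bound of $2\OPT$ on aggregate quantities, and no choice of the constant hidden in $\Theta(\epsilon)$ turns that into the $\epsilon\OPT$ slack that property~3 demands.

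The paper's mechanism (used both for the original Bateni et al.\ theorem and for the strengthened Theorem~3.1) inverts your scaling and avoids rejection entirely. It first computes a $2$-approximate Steiner forest $F^*$, so every demand pair already lives inside a single component $K$ of $F^*$ and is therefore never split across clusters. It then contracts each component $K$ to a vertex and gives that vertex a budget \emph{inflated} by $1/\epsilon$, namely $\phi[u_K]=2\epsilon^{-1}\len(K)$, and runs a moat-growing (Goemans--Williamson-style) procedure on the contracted graph; the trees $T_i$ are the resulting merged clusters, and their total length is at most $2\sum\phi_0\le\bigl(\frac{4}{\epsilon}\bigr)\cdot 2\cdot\frac12\OPT$ plus $\len(F^*)$, giving property~2. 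For property~3, an optimal solution $E'$ is distributed among the sub-instances; a sub-instance can fail to be feasible only when the relevant $E'$-component escapes its subgraph, which forces the color of $u_K$ to be \emph{exhausted}, and the exhausted budgets sum to at most $\len(E')=\OPT$. Re-adding the component $K$ itself restores feasibility at cost $\len(K)=\frac{\epsilon}{2}\phi[u_K]$, so the total patching cost is at most $\frac{\epsilon}{2}\OPT$ (plus $\frac{\epsilon}{2}\OPT$ for the negligible components). The $1/\epsilon$ inflation of the vertex budgets is exactly the lever that converts the $O(\OPT)$ dual bound into an $\epsilon\OPT$ error term; your $\epsilon$-scaled demand penalties move that lever in the wrong direction. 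If you want to salvage your route, you would need to replace the per-demand penalties with per-component budgets of order $\len(K)/\epsilon$ on a precomputed feasible forest, at which point you have essentially rederived the paper's argument.
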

To obtain a spanner for the Steiner forest instance $(G_{in}, \mathcal
D_{in})$, therefore, one can perform PC clustering, and then, for each tree
$T_i$, apply Lemma~\ref{lem:BKM-spanner} to obtain a spanner $H_i$.  The third
property of PC clustering implies that the union $\bigcup_{i=1}^\ell H_i$ will
be a Steiner-forest spanner for the original instance.

Gassner~\cite{Gassner10} showed that Steiner forest is NP-hard even in graphs
of treewidth~3.  Bateni et al.\ addressed this difficulty by giving an
(inefficient) PTAS for Steiner forest in bounded-treewidth graphs, one that
takes $n^{O(w^2/\epsilon)}$ time ($w$=width).

\subsection{Our improvements to the spanner step}

When we try to obtain a quasi-linear approximation scheme, the PC-clustering
algorithm of Bateni et al.\ fails us in two ways.
\begin{itemize}
\item The running time is $O(n^2 \log n)$.  Indeed, the running time is given
in~\cite{BateniHM10} as ``polynomial''; beyond that, it does not matter since
the overall time for their approximation scheme is $O(n^{f(\epsilon)})$.
\item Once the trees $T_1, \ldots, T_k$ are found, a spanner $H_i$ needs to be
found for each tree.  Finding a spanner, given $T_i$, takes $O(n \log n)$ time,
so the overall time for finding the spanners is $O(k n \log n)$.  Since $k$ is
$\Omega(n)$ in the worst case, the bound is $O(n^2 \log n)$.
\end{itemize}
We give a PC-clustering theorem that addresses both issues: the algorithm runs
in $O(n \log n)$ for planar graphs (in fact, for any excluded-minor family) and
it returns subgraphs $G_1, \ldots, G_\ell$ with small overlap (each edge is in
$O(\log n)$ subgraphs) in which the spanners can be
found.\footnote{PC-clustering can be stated in a somewhat more general way and
is used in this way in multiterminal cut; our result actually addresses the
more general problem.}

\begin{theorem}[New PC-clustering] \label{thm:PC-clustering}
For any $\delta>0$, there is an algorithm that, given $\epsilon>0$ and a
Steiner forest instance $(G, \mathcal D)$, outputs a partition ${\mathcal D}_1
\cup \dots \cup {\mathcal D}_\ell$ of $\mathcal D$ and corresponding trees
$T_1, \ldots, T_\ell$ and subgraphs $G_1, \ldots, G_\ell$ such that
\begin{enumerate}
\item the terminals comprising ${\cal D}_i$ belong to the tree $T_i$,
\item $\sum_{i=1}^\ell \len(T_i) \leq (\frac{4+\delta}{\epsilon}+2) \OPT(G,
{\cal D})$,
\item $\sum_{i=1}^\ell \OPT(G_i, {\cal D}_i) \leq (1+\epsilon) \OPT(G, {\cal
D})$,
\item each edge of $G$ is in $O(\log n)$ of the subgraphs.
\end{enumerate}
If the input graph $G$ is simple and planar or, more generally, comes from a
fixed excluded-minor family, the running time of the algorithm is $O(n \log
n)$.
\end{theorem}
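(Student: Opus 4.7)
The plan is to adapt the Goemans--Williamson moat-growing prize-collecting algorithm used by Bateni, Hajiaghayi, and Marx, introducing two changes: organize the moats into a logarithmic-depth hierarchy so that the associated subgraphs $G_i$ have bounded overlap, and exploit the separator structure of excluded-minor graphs to run the whole procedure in $O(n \log n)$ time.

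First I would run a uniform moat-growing process: maintain active clusters (initially singleton terminals), grow each active cluster's dual moat at unit rate, and merge clusters whose moats meet. A demand becomes satisfied when both of its endpoints lie in a common cluster. To prevent excessive moat charge, I would declare a cluster inactive and output its spanning tree $T_i$ of tight edges once its cumulative moat charge exceeds a $((4+\delta)/2\epsilon)$-multiple of the sum of distances of its still-unsatisfied demands. Standard LP-duality analysis, following Bateni et al., then yields properties~(1)--(3) with the improved bound $(4+\delta)/\epsilon + 2$.

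Next I would construct $G_i$ so as to obtain property~(4). Because moats only merge, their evolution forms a laminar family of balls. I would classify clusters by the dyadic scale of their final moat radius; after rescaling edge lengths to polynomially bounded integers, there are only $O(\log n)$ scales. Define $G_i$ as the subgraph of edges lying in the final moat of cluster $i$, together with edges of the $O(\log n)$ ancestral moats that contained $T_i$ at some point. At any fixed scale the chosen balls are pairwise disjoint, giving the $O(\log n)$-overlap bound. Property~(3) then follows by rerouting an optimum solution for $\mathcal D_i$ inside the moat of cluster $i$: any excursion outside can be charged to a dual moat of another cluster, with the total error absorbed into the $(1+\epsilon)$ factor.

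For the running time, I would implement moat-growing with a global priority queue of merge, satisfaction, and deactivation events. Using dynamic trees together with the recursive $r$-division decomposition available for any fixed excluded-minor family, the next event can be maintained in amortized $O(\log n)$ per update, and the total number of events is $O(n)$, giving $O(n \log n)$ overall. The main obstacle is property~(3) in the presence of property~(4): unlike Bateni et al., where $G_i = G$ makes (3) nearly immediate, we must show that restricting an optimum forest to the moat of cluster $i$ loses only an $\epsilon$-fraction of $\OPT(G,\mathcal D)$. This requires charging every removed optimum edge to the dual moat of some other cluster in a nearly disjoint fashion and verifying that no moat is overcharged; the extra slack $\delta$ in property~(2) is what gets spent to make this charging work.
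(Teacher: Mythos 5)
Your overall architecture (moat growing, a hierarchy for bounded overlap, data structures for $O(n\log n)$) matches the paper's in spirit, but three of the four properties rest on steps that either differ essentially from what works or are left unproven. First, you omit the step the paper's accounting actually depends on: it begins by computing a 2-approximate forest $F^*$, contracting each component $K$, and assigning that contracted vertex energy $2\epsilon^{-1}\len(K)$; clusters die when their energy is exhausted. Your deactivation rule (stop when moat charge exceeds a multiple of the distances of unsatisfied demands) is not Bateni et al.'s, so you cannot simply invoke their LP-duality analysis for properties (1)--(3). Second, and most seriously, property (3) is exactly the point you flag as ``the main obstacle'' and do not resolve, and the mechanism you sketch is not the one that works. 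The paper does not reroute an optimum forest inside moats; it assigns each connected component of the optimum $E'$ to the deepest subinstance $(G_v',\mathcal D_v)$ containing it, and when a demand of $\mathcal D_v$ is left disconnected it adds back the entire component $K$ of $F^*$ joining that pair. The cost of all such additions is bounded by an ``exhausted colors'' lemma: if $E'$ has a path leaving $S_v$ then the contracted vertex $u_K$ is exhausted by $E'$, and $\sum\phi[u_K]$ over exhausted vertices is at most $\len(E')$, so the added length is at most $\epsilon\len(E')$. Note also that $\delta$ plays no role in property (3); it is spent entirely on property (4) and costs a $(1+\delta)$ factor in property (2).

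For property (4), your dyadic-scale argument needs edge lengths rescaled to polynomially bounded integers to get $O(\log n)$ scales, which is an unjustified assumption; the moat radii can span an arbitrary range. The paper's device is different and is the main new idea: a ``zombie''/SAVE rule keeps the tight edge joining a living cluster to a cluster that died at time $d[v]$ whenever the merge happens before time $(1+\delta)d[v]$. Consequently a dead cluster becomes \emph{isolated} (prunable, hence a node of the nesting forest $\mathcal I$) only if it is absorbed at time exceeding $(1+\delta)$ times its death time, so death times grow geometrically along $\mathcal I$ and its depth is $O(\log_{1+\delta}(\sum\phi_0/\min\phi_0))=O(\log n)$, since every positive energy is at least $\frac{1}{k}\len(F^*)$. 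This is what makes the overlap bound unconditional, at the price of the $(1+\delta)$ loss in the forest-length bound (the ``recently dead'' clusters may still be leaves of degree one). Finally, on running time, $r$-divisions plus dynamic trees do not address the real difficulty, which is that a cluster's status change alters the shrink rate of all incident edges; the paper handles this with Klein's bicategory heaps combined with a dynamically maintained bounded-outdegree orientation (Brodal--Fagerberg), so that only the $O(1)$ outgoing edges of a vertex are touched explicitly.
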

Combining this algorithm with the $O(n \log n)$ construction of
Lemma~\ref{lem:BKM-spanner}, we obtain an $O(n \log^2 n)$ algorithm for
obtaining a Steiner-forest spanner for simple planar graphs.

PC-clustering is but one example of the use of primal-dual approximation
algorithms in approximation schemes for planar, bounded-genus, and
minor-excluded graphs.  Our technique for speeding up PC-clustering in planar
and bounded-genus graphs in such graphs applies to other primal-dual
approximation algorithms as well.  For example, the technique can be used on
the Goemans-Williamson approximation algorithm for prize-collecting TSP and
prize-collecting Steiner tree.  As a consequence, we obtain $O(n \log n)$
approximation schemes for these problems in planar and bounded-genus graphs.
The speed-up in the algorithm comes from use of a dynamic
data-structure~\cite{BrodalF99,Kowalik10} for maintaining orientations,
together with ideas from a data structure~\cite{Klein94} for efficient
implementation of primal-dual approximation algorithms.

\subsection{Our improvements to the dynamic programming step}

When we try to obtain an efficient approximation scheme, the dynamic program of
Bateni et al.\ fails us in one way: each tree that crosses the boundary of a
cluster is approximately represented by $O(w/\epsilon )$ of its vertices, and
there are $n^{O(w/\epsilon )}$ possible such vertex choices.

We take advantage of the spanner property in combination with the bounded
branchwidth property.  Recall that the graph $G_1$  has length at most $c
\OPT(G_0)$, and hence so does the graph $G_2$ resulting from the thinning step.
In Section~\ref{sec:algorithm}, we prove the following:
\begin{theorem} \label{thm:fast-dynamic-programming}
For any constant $\epsilon>0$, there is an $O(f(w) n \log^2 n)$ algorithm that,
for any instance $(G, \mathcal D)$ of Steiner forest of branchwidth $w$, finds
a solution of length at most $\OPT(G, \mathcal D)+\epsilon \len(G)$, where
$f(\cdot)$ is a fixed function.\footnote{There is no great significance to our
changing from treewidth to branchwidth.}
\end{theorem}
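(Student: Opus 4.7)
The plan is to run a bottom-up dynamic program over a branch decomposition of $G$, with a state-discretization step that crucially exploits the hypothesis $\len(G) = O(\OPT(G, \mathcal D))$ inherited from the spanner and thinning phases.

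First, I would compute a branch decomposition $\mathcal C$ of $G$ of width at most $w$ and balance its inclusion tree to depth $O(\log n)$; a standard separator-based balancing can be arranged to lose at most a constant factor in width. For each cluster $C$ with boundary $\partial C$, a DP signature records (i) the partition $\pi$ of $\partial C$ induced by the components of the candidate partial solution $F_C \subseteq E(C)$, and (ii) for each block $B$ of $\pi$, a summary of the demands \emph{pending} at $B$, i.e., those with one endpoint already reached by the component of $B$ and the other endpoint still to be routed in from outside $C$. Partitions alone already number $B_w$; the second component is where an unrounded DP blows up, because in principle any subset of $\mathcal D$ might be pending at any block, which is exactly the source of the $n^{O(w/\epsilon)}$ factor in Bateni--Hajiaghayi--Marx.

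The key idea is to let the DP confuse pending-demand configurations that are ``similar enough.'' I would classify each pending demand by the coarse direction of its outside endpoint (the child subtree of $\mathcal C$ above $C$ that will eventually consume it) and record, per block, only the \emph{count} of pending demands in each class, rounded on a multiplicative grid of resolution $1+\epsilon/n$. Because $\len(G) = O(\OPT)$, a single rounding step can be charged at most $O(\epsilon \len(G) / n)$ extra cost; summed over the $O(n)$ clusters, the cumulative additive error is $O(\epsilon \len(G))$, as required by the theorem. The number of signatures per cluster is then at most $B_w \cdot (\log n / \epsilon)^{O(w)}$, independent of $|\mathcal D|$ up to polylogarithmic factors. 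Each merge at an internal cluster $C$ enumerates pairs of compatible signatures of the two children, identifies vertices of $\partial C_1 \cap \partial C_2$, projects onto $\partial C$, and combines costs; with the balanced decomposition this yields total running time $O(f(w) n \log^2 n)$ once the polylogarithmic factors are absorbed into $f(w)$ for constant $\epsilon$. Correctness follows by a straightforward induction on the decomposition tree: for every cluster $C$, the rounded table contains a signature that matches $F^* \cap C$ for an optimum forest $F^*$, up to the accumulated rounding error.

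The main obstacle will be designing the demand-summary encoding so that it is simultaneously (a) coarse enough to bound the per-cluster state space by a function of $w$ and $\epsilon$ alone, (b) stable under the gluing operation that merges two children into $C$ (so that after rounding, a merged signature remains a faithful summary of some partial solution), and (c) fine enough that a rounded version of the optimal forest is still realizable in $G$ at additional cost $O(\epsilon \len(G)/n)$ per cluster. Satisfying (b) and (c) simultaneously is delicate because the merge step reclassifies pending demands (their ``direction'' changes when one moves up the decomposition tree), so the rounding buckets must be updated consistently and each update must be charged against the budget $O(\epsilon \len(G)/n)$ in a way that does not compound across the $O(\log n)$ levels of the balanced tree.
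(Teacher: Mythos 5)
Your opening moves match the paper's: normalize lengths, balance the branch decomposition to logarithmic height at the cost of doubling the width, and run a bottom-up DP over configurations that record boundary connectivity. But the heart of your proposal --- summarizing pending demands per block by a \emph{rounded count} in each ``direction class'' --- is the step that fails, and it fails for a reason intrinsic to Steiner forest rather than a fixable technicality. Feasibility is a per-demand constraint: two demands whose outside endpoints lie in the same subtree of the decomposition are not interchangeable, since connecting one pair of terminals does nothing toward connecting the other. A count cannot certify that the specific demand $\{s_i,t_i\}$ is satisfied, only that ``some demand of that class'' is. Multiplicative rounding of counts is a tool for problems where items are fungible up to cost; here, misattributing even one pending demand is a feasibility violation, and repairing it could cost up to the diameter of the graph, not the $O(\epsilon\len(G)/n)$ you charge per cluster. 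You flag exactly this tension in your final paragraph (points (b) and (c)), but you do not resolve it, and I do not believe it can be resolved along these lines --- this is essentially why Steiner forest is NP-hard already at treewidth 3 and why Bateni et al.\ needed $n^{O(w/\epsilon)}$ states rather than a count-based compression.

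The paper's proof keeps the demand bookkeeping \emph{exact} (the configuration includes a partition $\piall$ of all active vertices, and demand-consistency is checked terminal by terminal) and instead coarsens the \emph{geometry} of where active vertices sit. The ingredients you are missing are: (i) a contraction step that, for each cluster, contracts the edges within a radius $\rho^C$ of the boundary whenever that neighborhood is dense, so that the contracted cluster has linear growth rate near $\partial C$ (Lemmas~\ref{lem:light-margin} and~\ref{lem:contraction-cost}); (ii) a multi-scale cover of the boundary neighborhood by $O(\alpha/\beta+w)$ regions of diameter $\beta 2^i$ at each scale $i$; and (iii) a structural patching theorem (Theorem~\ref{thm:DP-approximation}) showing that any solution can be augmented, at additive cost $O(\epsilon)\len(F)+O(w/\alpha)\len(G)$, so that each outgoing tree's active vertices are exactly a union of regions at a single guessed scale --- the point being that distinct trees approaching the same boundary vertex can be merged cheaply, which is what collapses the configuration count to $(\log\log n)^{f(\epsilon)}$ per cluster. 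Without something playing the role of (iii), a state space of size $f(w,\epsilon)\cdot\mathrm{polylog}(n)$ has no realizability guarantee: there is no argument that the optimal forest's trace on each cluster is (close to) one of your retained signatures.
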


We achieve this using a new graph construction on branch-decompositions.  For
each cluster, if the sum of lengths of edges near the cluster's boundary is
high then the edges are contracted.  The result is a graph in which, for each
cluster, the sum of lengths of edges near the cluster's boundary is not too
big.  We can therefore cover the region near the boundary by a constant number
of regions of low diameter.  This simplifies the dynamic program since it
doesn't have to keep track of exactly where the terminals are---just which
regions contain them.  Since the number of regions is constant, we can get by
with fewer configurations.

The situation is a bit more complicated because the dynamic program
has to deal with regions at different scales, and has to guess the
scales.  We show it suffices to guess among a number of scales that is
logarithmic in the \emph{height} of the branch-decomposition tree and
exhibit a linear-time algorithm that, given an arbitrary branch
decomposition of width $w$, finds a new branch decomposition of width
$2 w$ and logarithmic height.  Another complication is the edges that
were contracted in the graph construction.  We show that, after
uncontracting these edges, the optimal solution can be patched so that
its length does not increase much.  Consequently, the solution found
by the dynamic program has length not much more than optimal.

Combining Theorem~\ref{thm:PC-clustering},
Theorem~\ref{thm:fast-dynamic-programming}, the $O(n \log n)$ spanner
construction of~\cite{BorradaileKleinMathieu2009} described in
Lemma~\ref{lem:BKM-spanner}, and the framework, we obtain
Theorem~\ref{thm:main}.

\section{Proof of Theorem~\ref{thm:PC-clustering}} \label{sec:PC-clustering}

\newcommand{\recentdead}{\text{RecentDead}}
\newcommand{\living}{\text{Living}}
\newcommand{\save}{\text{SAVE}}
\begin{figure*}\label{fig:pcclustering}
\begin{center}
\fbox{\parbox{6in}{\vspace{-2ex}\begin{tabbing}
{\small\bf PC-clustering, Phase 1:}\\
      \quad \= {\em input:} an initial graph $G$ with edge-lengths
      $\len(\cdot)$, and an initial assignment
      $\phi$ of budgets to vertices\\
      \> $t := 0$;\quad $\save := \emptyset$\\
      \> for each vertex $v$, \=$d[v] := 0$ \\
      \> while there is a living vertex\\
      \> \quad \=$\Delta_1 := \min \set{\phi[v]\ :\  v\in V(G), v \text{ living}}$\\
      \> \>$\Delta_2 := \min$\ \=$\set{\len(uv)\ :\ uv \in
        E(G), \text{ one of }\set{u,v}\text{ is living}} $\\
      \>\>\>${} \cup \set{\len(uv)/2\ :\ uv\in E(G), \text{ both $u$ and $v$ are living}}$\\
      \> \> $\Delta := \min \set{\Delta_1, \Delta_2}$ \# which happens first?\\
      \> \> $t := t+\Delta$ \# advance time\\
      \> \> for every living vertex $u$,\\
      \> \> \quad \= $d[u] := d[u] + \Delta$\\
      \>\>            \> $\phi[u] := \phi[u] - \Delta$ \\
      \> \>           \> $\len(uv) := \len(uv) - \Delta$ for every incident edge $uv$\\
      \> \> if some edge $uv$ now has zero length,\\
      \>\> \> contract $uv$, creating new vertex $w$\\
      \dag\> \>\> assign $\phi[w] := \phi[u] +\phi[v]$ and $d[w] := \max \set{d[u], d[v]}$\\
      \>\>\>if some endpoint (say $v$) is not living but $t <
      (1+\delta) d[v]$
      then add $uv$ to $\save$\\
      \>$F_1 := \set{\text{edges contracted}}$\\
{\small\bf PC-clustering, Phase 2:}\\
\> initialize $F_2 := F_1$\\
\> while there is an edge $e\in F_2-\save$ that is the only edge incident to a
dead vertex $v$\\
\> \quad delete $e$ from $F_2$
\end{tabbing}\vspace{-2ex}}}
\end{center}
\caption{The new PC-clustering algorithm}
\end{figure*}

We describe the algorithm that proves Theorem~\ref{thm:PC-clustering}.
It involves just a small change to the PC-clustering algorithm
of~\cite{BateniHM10}, although our presentation is different.

\subsection{Algorithm for basic PC-clustering with graph decomposition}
\label{sec:PC-clustering-algorithm}

In contrast to~\cite{BateniHM10}, we describe PC-clustering using contractions.
 When an edge $uv$ is
contracted, the endpoints are coalesced to form a new vertex.
The variable $t$ represents simulated
time.  Part of the input is an assignment $\phi[\cdot]$ of ``energy''
to vertices, that,  over time,consumed. 
When two endpoints $u,v$ of an edge are coalesced, the new vertex
combines their remaining energy.
We say a vertex$v$ is {\em living} if it has not yet exhausted its energy, i.e.
$\phi[v]>0$ (else {\em dead}), and $d[v]$ represents the amount of
(simulated) time $v$ has lived so far.  Our substantive change is to
introduce the notion of ``zombie'' vertices\cite{Romero}, which are
vertices that are joined to living vertices not too long (depending on
a parameter $\delta$) after they
die.  \\

\noindent Let $G_0$ denote the graph $G$ before the contractions of
Phase~1.  In the following, unless otherwise stated, the term {\em vertices}
includes the original vertices of $G_0$ as well as the new ones formed
by contraction.  For each vertex $v$, let $\phi_0[v]$ denote the
initial value of $\phi[v]$, the value when it is first assigned
(whether before Phase~1 commences, in the case of vertices of $G_0$,
or in line~\dag\ for vertices created by contractions).

The contractions define a binary forest, the {\em contraction forest},
on the vertices.  If an edge $uv$ was contracted and the resulting
vertex is $w$ then $u$ and $v$ are the two children of $w$ in the
contraction forest.

For each vertex $v$, let $S_v$ be the set of vertices of $G_0$ that were
coalesced to form $v$, and let $G_v$ be the subgraph of $G_0$ induced
by $S_v$.

We say an edge $e$ of $G_0$ is {\em incident} to a vertex $v$
if exactly one of the endpoints of $e$ in $G_0$ belongs to $S_v$.

A vertex $v$ is {\em isolated} if, at the end of Phase~2, no edge of
$F_2$ is incident to it.  We define the {\em isolated-dead-vertex
  forest} $\mathcal I$ to be the forest whose nodes are the isolated
dead vertices and such that the parent of $v$ is its nearest proper
isolated dead ancestor in the contraction forest.

\begin{lemma} \label{lem:connected-components}
For each vertex $v\in V(\mathcal I)$, there is a connected component
$T_v$ of $F_2$ whose vertex set is $S_v - \bigcup \set{S_w\
  :\ w \text{ a child of } v \text{ in } \mathcal I}$.
\end{lemma}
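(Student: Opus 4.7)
The plan is to analyze Phase~2 in terms of the final isolated-dead set $V(\mathcal{I})$, then use this to pin down the component structure of $F_2$.

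First, I would establish the characterization
\[
F_2 = \{\, e \in F_1 : e \text{ is not incident to any } v' \in V(\mathcal{I})\,\}.
\]
An edge incident to an isolated vertex cannot be in $F_2$, by the definition of isolation. Conversely, any edge $e$ removed in Phase~2 was the sole $F_2$-incident edge of some dead vertex $v'$ at the moment of its removal; after removal $v'$ has zero $F_2$-incident edges, and since $F_2$ only shrinks, $v'$ remains isolated, hence $v'\in V(\mathcal{I})$. The $\save$-guard further implies that no $\save$-edge is incident to any $v'\in V(\mathcal{I})$, since $\save\subseteq F_2$.

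Second, using this characterization, I would check separation: if $e\in F_2$ has endpoints $x,y\in V(G_0)$, then for every $v'\in V(\mathcal{I})$, both or neither of $x,y$ lies in $S_{v'}$. Walking up the $\mathcal{CF}$-ancestor chains of $x$ and of $y$, their iso-dead ancestors coincide, so they share the same nearest iso-dead $\mathcal{CF}$-ancestor, placing them in the same $U_v$.

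Third, I would show each nonempty $U_v$ is $F_2$-connected. The set $T_v := F_1\cap E(G_v)$ is a spanning tree of $S_v$, since the contractions inside $v$'s $\mathcal{CF}$-subtree reduce the number of components on $S_v$ from $|S_v|$ to $1$ using exactly $|S_v|-1$ edges. Any $T_v$-edge whose endpoints both lie in $U_v$ is not incident to any iso-dead vertex and hence belongs to $F_2$. Contracting each subtree $S_w$ (for $w\in W_v$) of $T_v$ to a super-vertex yields a smaller tree $T_v'$ on $U_v\cup W_v$. If each such super-vertex $w$ is a leaf of $T_v'$ (i.e., $T_v$ has exactly one cross-edge between $S_w$ and $S_v\setminus S_w$), then deleting the $W_v$-super-vertices with their single incident edges leaves a spanning tree on $U_v$ consisting of $F_2$-edges, giving the required component.

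I expect the hardest step to be proving this leaf property. The plan is to argue by contradiction: if $w\in W_v$ had two $T_v$-cross-edges, they would come from contraction-forest ancestors $\alpha\prec\alpha'$ of $w$ lying strictly between $w$ and $v$ in $\mathcal{CF}$. Analyzing $\alpha$'s $F_1$-incident edges shows that each of them is also incident to $w$ and is therefore pruned along with $w$'s cross-edges; this isolates $\alpha$, and combined with $\alpha$ being dead forces $\alpha\in V(\mathcal{I})$, contradicting that $v$ is the \emph{nearest} iso-dead $\mathcal{CF}$-ancestor of $w$. The subcases where $\alpha$ might be alive, or where it has further independent cross-edges contributed from higher ancestors, are handled by iterating the argument down the $\mathcal{CF}$-chain between $w$ and $v$; the degenerate case $U_v=\emptyset$ is handled trivially, as there is then no connectivity claim to verify.
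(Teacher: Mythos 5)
The paper states this lemma without proof, so I am judging your argument on its own terms. Your first two steps are sound and genuinely useful: the characterization $F_2=\{e\in F_1: e$ not incident to any $v'\in V(\mathcal I)\}$ follows exactly as you say (every Phase-2 deletion leaves its witnessing dead vertex isolated forever), and the separation argument correctly shows that every $F_2$-component lies inside a single set $S_v-\bigcup_w S_w$. The gap is in the connectivity step, and it is not a fillable detail. First, the stated justification for the leaf property is wrong: an $F_1$-edge incident to $\alpha$ has its $S_\alpha$-endpoint anywhere in $S_\alpha$, not necessarily in $S_w$, so there is no reason such an edge is ``also incident to $w$'' or gets pruned with $w$'s cross-edges; the conclusion that $\alpha$ becomes isolated, and the unexplained assertion that $\alpha$ is dead, do not follow, and ``iterating down the chain'' inherits the same defect. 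Second, and more fundamentally, no argument built only on your Step-1 characterization and the laminar structure can succeed, because there are combinatorial configurations consistent with both in which the lemma is false. Take the contraction chain $\{a,b\}\to\alpha$, $\{\alpha,c\}\to\beta$ (via edge $ac$), $\{\beta,d\}\to v$ (via edge $bd$), and suppose $V(\mathcal I)=\{v,\alpha\}$ with exactly the edges incident to $\alpha$, namely $ac$ and $bd$, deleted. This satisfies your characterization and your separation property, $\alpha$ has two $T_v$-cross-edges, and $U_v=\{c,d\}$ is disconnected in $F_2$. The lemma survives only because this configuration is unrealizable, and ruling it out requires the dynamics your proof never touches: that an unsaved contracted edge must have at least one endpoint cluster that is absorbed while still alive (forced by the length-decrease rule together with the $(1+\delta)$ condition defining $\save$), and that every Phase-2 deletion must be witnessed by a dead cluster whose unique remaining edge it is (here every candidate witness either still has two edges or would itself enter $V(\mathcal I)$ and change the sets $U_v$, producing a deadlock).

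A smaller but real error: if $U_v=\emptyset$ the lemma is false for that $v$ (no component has empty vertex set), not vacuously true, so ``handled trivially'' is not an option; one must prove $U_v\neq\emptyset$, and that again needs the absorbed-alive fact (at least one child of each cluster avoids $V(\mathcal I)$, so nonemptiness propagates up the contraction forest). A workable route is to run an induction over the Phase-2 deletions themselves, maintaining that the current components of $F_2$ are exactly the sets $S_{v'}-\bigcup S_{w}$ taken over the currently isolated dead vertices, and using the witness structure plus the absorbed-alive fact to show each deletion splits exactly one such set into exactly two nonempty ones. Your Steps 1--2 would then serve to identify the final family of components with the sets named in the lemma.
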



\begin{lemma} \label{lem:depth}
 The depth of $\mathcal I$ is at most $\displaystyle 1+\log_{1+\delta}
 \frac{\sum \set{\phi_0[u]\ :\ u\in V(G_0)}}{\min \set{\phi_0[v]\ :\ v\in
     V(G_0), \phi_0[v]>0}}$.
\end{lemma}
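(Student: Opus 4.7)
The plan is to associate to each $v \in V(\mathcal I)$ the quantity $d[v]$ that the algorithm already maintains and show that along each parent-edge of $\mathcal I$ this quantity multiplies by at least $1+\delta$. Combined with the extreme-point bounds $d[v]\le \Phi := \sum_{u \in V(G_0)} \phi_0[u]$ everywhere and $d[v]\ge \phi_{\min}$ at the leaves of $\mathcal I$, a simple induction on depth then yields $(1+\delta)^{k-1}\phi_{\min} \le \Phi$, hence $k \le 1+\log_{1+\delta}(\Phi/\phi_{\min})$.

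The central claim is: if $v \in \mathcal I$ has parent $w \in \mathcal I$, then $d[w]\ge (1+\delta)\,d[v]$. The argument I have in mind starts from the fact that $v$ is isolated: every edge of $F_1$ with exactly one endpoint in $S_v$---in particular the edge $e$ along which $v$ was coalesced with its sibling $v'$ at some time $T$---was removed during Phase~2 and therefore is not in $\save$. Because $v$ is dead, and because $v$ ceases to exist at time $T$, $v$ must already have been dead at time $T$. The contrapositive of the SAVE rule applied to $v$ then gives $T \ge (1+\delta)\,d[v]$. The super-vertex $v_1$ produced by this contraction satisfies $d[v_1] = \max(d[v],d[v'])$, and in the easy case where $v'$ is still alive at time $T$ one has $d[v'] = T \ge (1+\delta)\,d[v]$; since $d$-values are monotone non-decreasing along every contraction-forest path, $d[w]\ge d[v_1]\ge (1+\delta)\,d[v]$ follows.

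The extreme-point bounds are comparatively straightforward. Each unit of elapsed simulated time depletes at least one unit of vertex energy (otherwise no living vertex would remain and Phase~1 would have halted), so the total time elapsed in Phase~1 is at most $\Phi$, giving $d[v]\le \Phi$ for every $v$. At a leaf $v$ of $\mathcal I$ with $\sum_{u\in S_v}\phi_0[u] > 0$, some original $u \in S_v$ has $\phi_0[u]\ge \phi_{\min}$; this vertex eventually dies at $d[u]=\phi_0[u]$, and the $\max$-inheritance rule propagates this value upward to $d[v]\ge \phi_{\min}$.

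The main obstacle I anticipate is the case in the growth step where the sibling $v'$ is \emph{also} dead at time $T$: the SAVE rule then applies symmetrically, giving $d[v'] \le T/(1+\delta)$ and so $d[v_1] = \max(d[v],d[v']) \le T/(1+\delta)$ rather than $T$. Recovering the factor of $(1+\delta)$ requires exploiting that each intermediate path-vertex $v_1,v_2,\dots,v_{k-1}$ lies outside $\mathcal I$---being dead, each must therefore fail isolation---to locate, somewhere along the contraction-forest path from $v$ to $w$, a contraction whose other endpoint was alive at its own contraction time and thereby lifts $d$ by the missing multiplicative factor. The subsidiary case of leaves with $\sum_{u\in S_v}\phi_0[u] = 0$ accounts for the additive constant $1$ in the depth bound.
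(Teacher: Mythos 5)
Your overall strategy is exactly the paper's: the $\save$ rule forces $d$ to grow by a factor $1+\delta$ along each edge of $\mathcal I$, $d$ is bounded above by the total energy, and a root-to-leaf chain is therefore logarithmically short. The ``main obstacle'' you anticipate, however, is a phantom, and the repair you sketch is the wrong mechanism. An edge $uv$ has its length decremented only on behalf of living endpoints, so it can reach length zero only at the end of a round during which at least one endpoint was living; that endpoint's $d$-value is incremented to the current time $T$ in that same round. Hence the contracted vertex always receives $d[v_1]=\max(d[u],d[v])=T$, regardless of whether both endpoints are dead when the $\save$ test is made. (Indeed, if both were dead at that moment with $T>0$, the endpoint that died \emph{during} the round would have $d=T<(1+\delta)T$ and the edge would be placed in $\save$, contradicting the isolation of $v$.) So your ``easy case'' is the only case, and there is no need to hunt along the contraction-forest path for a contraction with a living endpoint---every contraction is one. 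The mechanism you propose instead (non-isolation of the intermediate vertices $v_1,\dots,v_{k-1}$) would not have delivered this anyway: non-isolation is a statement about which $F_2$-edges survive Phase~2, not about whether endpoints were alive at their contraction times.

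A second, smaller issue is the leaf bound. You assert that an original $u\in S_v$ with $\phi_0[u]\ge\phi_{\min}$ ``eventually dies at $d[u]=\phi_0[u]$'' and that max-inheritance propagates this to $v$. But $u$ may be contracted while still living, at some time $s<\phi_0[u]$, in which case $d[u]$ is frozen at $s$ and max-inheritance yields only $d[v]\ge s$. What is true---and what you need---is that the super-vertex containing $u$ remains living until time at least $\phi_0[u]$: by induction on contractions, every vertex $z$ satisfies $\phi[z]\ge \max\set{\phi_0[u']\ :\ u'\in S_z}-t$, because coalescing pools the remaining energies while only a single vertex then consumes them. Hence any \emph{dead} vertex $v$ whose lineage contains a positive-energy original has $d[v]\ge\phi_{\min}$; the leaves of $\mathcal I$ consisting of a single zero-energy original account for the additive $1$, as you say. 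The paper's own (terse) proof is silent on both points, so with these two repairs your argument is complete and coincides with it.
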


\begin{proof} At the end of Phase~1, for each dead vertex $v$, $d[v]$
  is the time when $v$ died.  Each root $r$ of
  $\mathcal I$ has $d[r] \leq \sum \set{\phi_0[u]\ :\ u\in V(G_0)}$.  
  Suppose $v$ is a nonroot vertex of $\mathcal I$, and let $w$ be the
  parent of $v$ in $\mathcal I$.  Let $uv$ be the edge contracted to
  form $w$.  Since $v$ is isolated, $uv$ does not remain in $F_2$, so
  $v\not\in \save$.  Therefore the time $t$ at which $uv$ is
  contracted must satisfy $t> (1+\delta) d[v]$.  Therefore $d[w] > (1+\delta)
  d[v]$.  This proves the lemma.
\end{proof}

\noindent The output of the algorithm is the forest $\mathcal I$ and,
for each vertex $v$ of $\mathcal I$, the subgraph $G_v$ and the
connected component $T_v$ of Lemma~\ref{lem:connected-components}.
Lemma~\ref{lem:depth} implies that each vertex/edge of $G_0$ is in a
logarithmic number of subgraphs.

\subsection{Length of forest returned by basic PC-clustering}

\begin{lemma} \label{lem:length}
The forest returned has length at most $2(1+\delta)\sum
\set{\phi_0[v]\ : \ v \in V(G_0)}$.  
\end{lemma}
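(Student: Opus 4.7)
My plan is to adapt the Goemans--Williamson moat-growing analysis, accounting for the zombies preserved by $\save$. For each current vertex $v$ and each simulated time $\tau$, let $y_v(\tau)\in\{0,1\}$ indicate whether $v$ is living. By conservation (line~\dag\ preserves total $\phi$, each living vertex depletes $\phi$ at unit rate, and Phase~1 terminates only when no living vertex remains), the total dual satisfies $\int_0^\infty |L(\tau)|\,d\tau = \sum_{v\in V(G_0)}\phi_0[v]$, where $L(\tau)$ denotes the set of live current vertices at time $\tau$.

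Since each edge $e\in F_2\subseteq F_1$ was contracted at the time $t_e$ its residual length reached zero, its original length equals $\int_0^{t_e}(y_{\hat u(\tau)}(\tau)+y_{\hat v(\tau)}(\tau))\,d\tau$, where $\hat u(\tau),\hat v(\tau)$ are the current vertices containing the endpoints of $e$. Summing over $e\in F_2$ and swapping sum and integral,
\[
\len(F_2)\;=\;\int_0^{\infty}\sum_{w\in L(\tau)}\deg_{F_2^\tau}(w)\,d\tau,
\]
where $F_2^\tau$ is the image of $F_2$ in the current contracted graph at time $\tau$. Because $F_2^\tau$ is a forest and every internal dead current vertex has degree at least two, a standard degree-sum calculation yields
\[
\sum_{w\in L(\tau)}\deg_{F_2^\tau}(w)\;\leq\; 2\,|L(\tau)| + |Z(\tau)|,
\]
where $Z(\tau)$ is the set of dead-leaf current vertices in $F_2^\tau$. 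The Phase~2 pruning rule (any pendant edge at a dead vertex not in $\save$ is removed) then lets me argue the invariant that every dead leaf of $F_2^\tau$ is a zombie whose pendant edge lies in $\save$, so the $\save$ condition $t_e<(1{+}\delta)d[w]$ restricts each such zombie to be a dead leaf for total time at most $\delta\,d[w]$.

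Integrating gives $\len(F_2)\leq 2\sum\phi_0[v]+\int_0^\infty |Z(\tau)|\,d\tau$, reducing the proof to showing $\int|Z(\tau)|\,d\tau\leq 2\delta\sum\phi_0[v]$. I expect this final bookkeeping to be the main obstacle: one must charge $\sum_{w\text{ zombie}}\delta\,d[w]$ against moat durations without incurring the multiplicative blow-up that a deep chain of zombifications in the contraction forest would naively produce. The plan is to exploit that the $\save$ condition forces $d[w]$ to lie close to the contraction time $t_e$ of $w$'s pendant edge, and to telescope across the contraction forest so that each moat duration is charged to $O(1)$ zombies on average. Together with the verification of the pruning invariant (which requires translating Phase~2's global pruning on $G_0$ into the current-view picture at time $\tau$), this yields $\len(F_2)\leq 2(1{+}\delta)\sum_{v\in V(G_0)}\phi_0[v]$, as claimed. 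The factor $2(1{+}\delta)$ is the expected target: $2$ is the classical Goemans--Williamson constant, and $(1{+}\delta)$ is exactly the slack permitted by $\save$.
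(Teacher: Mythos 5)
Your argument follows exactly the paper's route: write $\len(F_2)$ as the time-integral of the degrees of living clusters in the surviving forest, bound that degree sum by twice the number of living clusters plus the number of dead leaves, show $\int |L(\tau)|\,d\tau \le \Phi$ where $\Phi=\sum_v \phi_0[v]$, and observe via the SAVE rule that dead leaves must be ``recent dead.'' All of that matches the paper. The problem is that you stop exactly where the $(1+\delta)$ in the statement has to come from: you reduce the lemma to $\int |Z(\tau)|\,d\tau \le 2\delta\Phi$ and then explicitly defer that bound (``I expect this final bookkeeping to be the main obstacle''), offering only a plan to ``telescope across the contraction forest.'' A sketched plan for the decisive inequality is a genuine gap, not a proof.

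The quantitative handle you are missing is the potential $\tau[v] = \sum\{\phi_0[u] : u \in S_v\} - \phi[v]$, the energy consumed inside the supernode $v$. An induction over the contraction forest gives $d[v] \le \tau[v]$ (a cluster cannot have lived longer than the energy burned inside it), and the SAVE rule caps the time a dead cluster can persist as a leaf at $\delta\, d[v] \le \delta\,\tau[v]$; the paper charges this against the energy consumed within $S_v$ to conclude $\int|Z(\tau)|\,d\tau \le \delta\Phi$, which is even stronger than the $2\delta\Phi$ your (slightly sharper) degree inequality requires. To your credit, the double-counting danger you identify---nested clusters $v \subsetneq w$ both charging the same energy---is real, and the paper's one-sentence treatment of this charge is terse; the facts that rescue it are that a contracted cluster ceases to be a current vertex (so nested clusters are never counted at the same instant) and that $d$ only increases along a chain of contractions, which is what makes the per-cluster bound $\delta\,d[v]$ chargeable without blowup. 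But locating the difficulty is not the same as resolving it, and without $d[v]\le\tau[v]$ (or an equivalent) your telescoping plan has nothing to telescope against. Separately, you also leave unverified the claim that every dead leaf of the time-$\tau$ forest is a zombie with its pendant edge in SAVE; the paper asserts this directly from the Phase-2 pruning rule, and your write-up likewise only gestures at it.
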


\begin{proof}
For each value of $t$, let $\living(t)$ denote the set of vertices
that are living at time $t$, and let $\recentdead(t)$ denote the set of
vertices $v$ such that
$(\text{time of $v$'s death}) \leq t \leq (1+\delta)(\text{time of $v$'s death})$

For each vertex $v$, define $\tau[v] = \sum \set{\phi_0[u]\ :\ u \in
  S_v} - \phi[v]$.  Intuitively, $\tau[v]$ is the amount of energy
``used up'' by $v$ and its descendants in the contraction tree.
The algorithm ensures that $\phi[v]$ remains nonnegative, so $\tau[v]
\leq \sum \set{\phi_0[u]\ :\ u \in S_v}$.  Induction shows $d[v] \leq
\tau[v]$.  

In Phase~1, when an edge is added to $F_1$, its
its reduced length is zero.  The reduction in the length of edge
$uv$ can be attributed to the endpoint(s) living. 

For time $t$, let $G_t$ be the graph $G$ at time $t$, and let $H_t$ be
the edge-subgraph of $G_t$ consisting of edges that are in $F_2$ at
the end of Phase~2.  The total length of $F_2$ is at most
$\int \sum_{v\in\living(t)} (\text{degree of $v$ in $H_t$})\ dt$.
$H_t$ is a forest whose leaves are clusters that are
living or recent dead.  The degree in $H$ of dead clusters is at least
two, so $\sum_{C\in \living(t)} \deg_H(C) \leq 2(|\living(t)|
+|\recentdead(t)|)$.  Therefore the total length of $F_2$ is at most
$\int 2(|\living(t)| +|\recentdead(t)|)\ dt$.
While a vertex $v$ is living, $\phi[v]$ is decreasing at unit rate, so
$\tau[v]$ is increasing at unit rate.  This shows $\int |\living(t)|
dt \leq \sum \set{\phi[v] \ :\ v\in V(G_0)}$.
By the definition of $\recentdead(t)$ and 
$d[v] \leq \tau[v]$ shows
$\int |\recentdead(t)|)dt \leq \delta \sum \set{\phi[v] \ :\ v\in
  V(G_0)}$.  This proves the lemma.
\end{proof}

\paragraph{Remark} The only difference between this analysis and that
of~\cite{BateniHM10} is the part dealing with $\recentdead(t)$.

Recall that when two vertices coalesce, the resulting vertex gets the
remaining energy from its endpoints.  Therefore each bit of energy
possessed by a new vertex $v$ comes from some original vertex $u \in
S_v$.  Following~\cite{BateniHM10}, we think of the energy originally
assigned to $u$ as having the {\em color} $u$.  If some of $v$'s
energy comes from original vertex $u$, we will say that $v$ has color
$u$.  Let $E'$ be a set of edges of $G_0$.  We say a color $u$ is {\em
  exhausted} by $E'$ if every vertex $v$ colored by $u$ has an
incident edge in $E'$.  These concepts yield:

\begin{lemma}[Bateni et al.] \label{lem:exhaust-result} Let $L$ be the
  set of colors exhausted by $E'$.  The length of $E'$ is at least
  $\sum_{u\in L} \phi_u$.
\end{lemma}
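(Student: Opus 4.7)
The plan is a standard primal-dual argument keyed to the super-vertices produced during Phase~1. For each super-vertex $v$, let $y_v$ be the length of the time interval during which $v$ existed and was living; recall that an edge $e=(a,b)\in E(G_0)$ is \emph{incident to $v$} exactly when one of $a,b$ lies in $S_v$ and the other does not. At every time step the length-update line of Phase~1 reduces $\len(e)$ by $\Delta$ once for each endpoint of $e$ currently lying in a living super-vertex. Integrating this rate against time, and noting that each contribution comes from a super-vertex $v$ incident to $e$, yields the dual feasibility inequality
\[
\len(e)\;\ge\;\sum_{v \,:\, e \text{ incident to } v} y_v,\qquad e\in E(G_0).
\]

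Summing over $e\in E'$ and swapping sums gives $\sum_{e\in E'}\len(e)\ge\sum_{v\in\mathcal V(E')}y_v$, where $\mathcal V(E')$ is the set of super-vertices incident to at least one edge of $E'$. I would then lower-bound the right-hand side by $\sum_{u\in L}\phi_u$ via a color accounting. For each super-vertex $v$, $\phi[v]$ falls at unit rate during $v$'s living period, so the total energy consumed inside $v$ equals $y_v$; apportioning this consumption instantaneously among the colors $u\in S_v$ with positive remaining $u$-energy produces nonnegative numbers $d_u(v)$ with $\sum_{u\in S_v} d_u(v)=y_v$. Because the algorithm halts only once no super-vertex is living, every color is fully drained, so $\sum_v d_u(v)=\phi_u$ for every original vertex $u$; moreover $d_u(v)>0$ forces $v$ to have carried color $u$ at some point during its life.

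The conclusion is a double-counting. For $u\in L$, exhaustion states that every super-vertex colored by $u$ lies in $\mathcal V(E')$, hence $d_u(v)>0$ forces $v\in\mathcal V(E')$. Therefore
\[
\sum_{u\in L}\phi_u \;=\; \sum_{u\in L}\sum_{v} d_u(v) \;=\; \sum_{u\in L}\sum_{v\in\mathcal V(E')} d_u(v) \;\le\; \sum_{v\in\mathcal V(E')} y_v \;\le\; \sum_{e\in E'}\len(e),
\]
as claimed. The main obstacle is the color bookkeeping: the notion of ``$v$ colored by $u$'' used in the definition of exhaustion must align with the allocation $d_u(v)$, which I would ensure by defining $d_u(v)$ as the integrated instantaneous consumption of $u$'s color inside $v$ during $v$'s living period. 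Once this is done, both marginal identities for $d_u(v)$ follow immediately from the algorithm's halting condition.
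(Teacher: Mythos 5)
Your proof is correct: the dual-feasibility inequality $\len(e)\ge\sum_{v:\,e\text{ incident to }v}y_v$, the apportionment of each $y_v$ among colors via $d_u(v)$ with $\sum_v d_u(v)=\phi_u$ (forced by the termination condition that no vertex remains living), and the observation that exhaustion of $u$ places every $v$ with $d_u(v)>0$ in $\mathcal V(E')$ combine to give exactly the claimed bound. The paper states this lemma without proof, citing Bateni et al., and your argument is the standard primal-dual moat-packing proof underlying that citation, so it matches the intended approach.
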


\begin{lemma} \label{lem:exhaust-condition} An original vertex $u$ is
  exhausted by $E'$ if, for some dead vertex $v$ such that $u\in S_v$,
  $E'$ contains a path between $u$ and some original vertex not in
  $S_v$.
\end{lemma}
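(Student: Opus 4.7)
The plan is to verify the definition of exhausted directly: every vertex $w$ colored by $u$ has an edge of $E'$ incident to it. Since being colored by $u$ requires $u\in S_w$, each such $w$ is either $u$ itself or an ancestor of $u$ in the contraction forest. The hypothesis $u\in S_v$ places $v$ on this same ancestor chain, so every vertex colored by $u$ is comparable to $v$ in the contraction forest. I would split these into two cases: (a) $w$ equals $v$ or is a proper descendant of $v$, and (b) $w$ is a strict ancestor of $v$.

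For case (a) I would use the path $P\subseteq E'$ supplied by the hypothesis purely set-theoretically. Any such $w$ satisfies $u\in S_w\subseteq S_v$, while the other endpoint of $P$ lies outside $S_v\supseteq S_w$. Hence $P$ must contain at least one edge with exactly one endpoint in $S_w$; by the definition of incidence, that edge is incident to $w$, and it belongs to $E'$. Case (b) I would rule out entirely using the hypothesis that $v$ is dead. If $v$ is never coalesced in Phase~1, the case is vacuous. Otherwise, at the instant $v$ coalesces with a sibling $v'$ to form its parent in the contraction forest, $\phi[v]=0$, and the assignment at line~\dag\ gives the new vertex energy $\phi[v]+\phi[v']=\phi[v']$. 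Thus no energy from $v$'s subtree, and in particular none of $u$'s original energy, flows upward; an induction up the chain from $v$ to the root shows that no strict ancestor of $v$ is colored by $u$. Combining the two cases gives that every vertex colored by $u$ has an incident edge in $E'$, so $u$ is exhausted.

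The only genuinely subtle step is case (b): one must interpret ``$v$ is dead'' as meaning $\phi[v]=0$ at the last moment $v$ exists (either at the end of Phase~1, or just before $v$ is coalesced away), and then track how a vertex's energy combines with that of its sibling upon coalescence to see that dead vertices block the upward flow of their subtree's colors. Everything else reduces to unwinding the set-theoretic definitions of $S_w$ and of ``incident.''
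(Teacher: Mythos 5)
The paper states this lemma without proof, so there is nothing to compare against directly; your argument is correct and is the natural unwinding of the definitions that the authors evidently intend. The two key points are both handled properly: descendants $w$ of $v$ with $u\in S_w$ get an incident edge of $E'$ because the given path must cross out of $S_w\subseteq S_v$, and strict ancestors of $v$ are never colored $u$ because $\phi[v]=0$ at the moment $v$ is coalesced, so none of $u$'s energy survives past $v$ — which is exactly why the hypothesis that $v$ is dead is needed.
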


\subsection{Using PC-clustering in Steiner forest} \label{sec:Steiner-clustering}

Now we prove Theorem~\ref{thm:PC-clustering}.  (The proof of the
running time is in Section~\ref{sec:primal-dual}.)  
The input instance of Steiner forest is $(G_{in}, \mathcal D)$.
The
algorithm finds a 2-approximation solution $F^*$, and then obtains the
graph $G$ from $G_{in}$ by contracting
each connected component $K$ of $F^*$.  Let $k$ be the number of
components.  Let $Y$ be the set of components of length $<\frac{\epsilon}{2k}
\len(F^*)$.  For each component $K$ not in $Y$, the algorithm assigns
energy to the vertex $u$ of $G$ resulting from contracting $K$:
$\phi[u] := 2\epsilon^{-1}\ \len(K)$.  All other vertices of $G$
are assigned zero energy.

The algorithm runs Phase~1 and~2 of
Section~\ref{sec:PC-clustering-algorithm} on $G$ and $\phi[\cdot]$,
obtaining $F_2$ and the isolated-dead-vertex forest $\mathcal I$.
For each $v\in V(\mathcal I)$, the algorithm obtains a subgraph $G_v$ (see
Section~\ref{sec:PC-clustering-algorithm}) and (see
Lemma~\ref{lem:connected-components}) a connected component $T_v$ of
$F_2$.  For each, the algorithm obtains $G_v'$ from $G_v$ and $T_v'$ from $T_v$ by
uncontracting the edges of $F^*$, and defines $\mathcal D_v$ to be
the set of demands $(s, t)\in\mathcal D$ for which $s,t\in
V(T_v')$.  We claim that these structures satisfy the conditions in
Theorem~\ref{thm:PC-clustering}.  The first condition is satisfied by
construction.  Lemma~\ref{lem:length} implies that the second
condition is satisfied.  Since each vertex of $G$ is initially
assigned energy at least $\frac{1}{k}
\len(F^*)$, Lemma~\ref{lem:depth} implies that $\mathcal I$ has
depth $\leq 1+\log_{1+\delta} k$, which implies the fourth condition.

It remains to show the third condition, that the sum of optimum
values for the subinstances 
$\set{(G_v', \mathcal D_v)\ :\ v \in V(\mathcal I)}$
 is at most $1+\epsilon$ times the optimum value for
the original instance $(G_{in}, \mathcal D)$.  Since $\mathcal I$ is a
rooted forest, it induces a partial order on these subinstances.

Let $E'$ be the edge-set of an optimal solution to the original
instance.  Each connected component $C$ of $E'$ is assigned to the
subinstance 
 containing $C$ that is farthest
from a root in $\mathcal I$, i.e. the instance for which $G_v$ is smallest.
Let $H_v$ be the subgraph of $G_v'$ consisting of the components assigned to $(G_v',
\mathcal D_v)$.

Since $H_v$ might not constitute a feasible solution for that
instance, we might need to augment them.  Suppose that there is a
demand $(s, t)\in \mathcal D_v$ such that $s$ and $t$ are not
connected by $H_v$.  The 2-approximate solution $F^*$ contained some
connected component $K$ that joined $s$ and $t$; let $u_K$ be the vertex
in $G_0$ that resulted from contracting that connected component.  Since
$E'$ is a feasible solution, it too contained a connected component
that joined $s$ and $t$; since that component was not assigned to the
subinstance $(G_v', \mathcal D_v)$, it must be that the component is
not contained in $G_v'$, so, by Lemma~\ref{lem:exhaust-condition}, $u_K$
is exhausted by $E'$.  To augment $H_v$, we add the component $K$ of
the 2-approximate solution.  In the augmented solution, $s$ and $t$
are joined.  Either $K$ belongs to $Y$ or $\phi[u_K] =
2\epsilon^{-1}\,\len(K)$.  The sum of the lengths of components in
$Y$ is $\le k\cdot \frac{\epsilon}{2k} \len(F^*) \le
\frac{\epsilon}{2} \len(E')$, and $\sum
\set{\phi[u_K]\ :\ u_K \text{ is exhausted by } E'} \le \len(E')$, so
$\sum\set{\len(K)\ :\ K\not \in Y, u_K \text{ is exhausted by } E'}
\le \epsilon \len(E')$.  Therefore the sum of lengths of solutions to
the subinstances is $\le (1+\epsilon) \OPT(G_{in}, \mathcal D)$.  This
proves the third property of Theorem~\ref{thm:PC-clustering}.

\subsection{Primal-dual on planar and minor-excluded graphs} \label{sec:primal-dual}

We show that some primal-dual approximation algorithms, including
Goemans and Williamson's approximation algorithm for Steiner forest,
and Bateni, Hajiaghayi, and Marx's algorithm for PC clustering (and
our modification of this algorithm), can be implemented in $O(n \log
n)$ time for planar graphs.

The method is to combine an approach of~\cite{Klein94} to implementing
primal-dual approximation algorithms with a technique of~\cite{BrodalF99}

\subsubsection{Interface to data structure}

Klein~\cite{Klein94} shows that primal-dual algorithms such as
that of~\cite{GW95} can be implemented using a data structure.  There
are two categories ({\em active} and {\em
inactive} in the case of primal-dual).  An ordered pair $(c,c')$ of
categories is called a {\em bicategory}.  Each vertex $v$ is assigned
a to category $c(v)$, and thus
each edge $uv$ is assigned to a bicategory.
The data structure supports the following operations:
\begin{itemize}
\item
{\sc DecreaseCost}$(b, \delta)$, where $b$ is a bicategory and
$\delta$ is a real number, decreases by $\delta$ the cost of all
edges in bicategory~$b$. (2)
\item
{\sc FindMin}$(b)$ returns the minimum-cost edge in
bicategory~$b$.
\item
{\sc ChangeCategory}$(v,c)$ changes the category of $v$ to $c$
(implying changes to the bicategories of edges incident to $v$).
\item
{\sc ContractEdge}$(e,c)$ contracts $e$ and assigns the
resulting vertex to category~$c$.
\end{itemize}

\subsubsection{Representation}

Now we describe the data structure.
We use the ideas of~\cite{Klein94}
but make some changes to allow the data structure to be made more
efficient for graphs from a minor-excluded family.

The data structure maintains the following:
\begin{itemize}
\item
an orientation of the edges;
\item
an array $C[\cdot]$, indexed by vertices, such that $C[v]$ is
the category of $v$;
\item
an array $OUT[\cdot]$, indexed by a vertex $v$, such that
$OUT[v]$ is a linked list of the outgoing
edges of $v$;
\item an array $OUT[\cdot, \cdot]$, indexed by a vertex $v$ and a
category $c$, such that $OUT[v,c]$ is a linked list of the outgoing
edges of $v$ whose tails are in category~$c$;
\item
an array $IN[\cdot, \cdot]$, indexed by a
vertex $v$ and a category $c$, such that $IN[v, c]$ is a pointer to a
mergeable heap consisting of the incoming edges $uv$ of $v$ for which
the tail $u$ has category $c$;
\item
an array $B[\cdot]$, indexed by bicategories, such that $B[(c_1,
c_2)]$ is a heap consisting of $\set{IN(v,c_2)\ : C[v]=c_1}$.
\end{itemize}
Each edge and each heap has a real-number {\em label}.  The data
structure maintains the {\em label invariant}: the cost of an edge is the sum
of its label, the label of the heap $IN[v,b]$ that contains it, and the
label of the heap $B[b]$ that contains its heap.
The key of an edge in the priority queue that contains it is the
edge's label.  The key of a queue $IN[v,b]$ in the priority queue
$B[b]$ that contains it is the label of $IN[v,b]$ plus the minimum key
in $IN[v,b]$.

\subsubsection{Implementing {\sc DecreaseCost} and {\sc FindMin}}
{\sc DecreaseCost}$(b,\delta)$ is implemented by decreasing the label
of $IN(b)$ by $\delta$.  {\sc FindMin}$(b)$ is implemented by finding
the minimum heap in $B(b)$, and returning the minimum edge in that
heap.

\subsubsection{Implementing \sc ChangeCategory}$(v,c)$
Now we describe how to implement the operation {\sc
ChangeCategory}$(v,c)$.
Let $c_0 := C[v]$ (the old category of
$v$).  To handle the incoming
edges of $v$, for each category $c'$,  the heap $IN[v,c']$ is moved from $B[(c',
c_0)]$ to $B[(c', c)]$ (and the label of $IN[v,c']$ is adjusted to
preserve the label invariant).
To handle the outgoing edges of $v$, each
edge $vu$ in $OUT[v]$ is moved from $IN[u,c_0]$ to $IN[u,c]$ (and the
label of $vu$ is adjusted to preserve the label invariant).
Time required is $O(|\text{outgoing edges}| \log n)$.

\subsubsection{Implementing {\sc ContractEdge}}
To implement {\sc ContractEdge}$(uv,c)$, first delete the edge $uv$,
and change the categories of $u$ and $v$ to $c$.  Let $w$ denote the
vertex to be formed by coalescing $u$ and $v$.  For each category
$c'$, merge the heaps $IN[u, c']$ and $IN[v,
c']$, and assign the result to $IN[w, c']$; similarly, merge the lists
$OUT[u, c']$ and $OUT[v, c']$ and assign the result to $OUT[w, c']$.
Remove the edge $uv$ from the heap and list containing it.  Update the
tables to reflect the fact that $u$ and $v$ no longer exist.  The time
required is $O((|\set{\text{outgoing edges of $u$}}|+|\set{\text{outgoing edges
of $v$}}|) \log n)$.

\subsubsection{Maintaining bounded outdegree}

The time per operation is $O(\log n)$ if the outdegree of each vertex
is bounded.  Brodal and Fagerberg~\cite{BrodalF99} give a method for
dynamically maintaining bounded-outdegree orientations in families of graphs that
guarantee the existence of such orientations. Kowalik~\cite{Kowalik10}
points out that their method works with contractions.  Each update
takes amortized $O(\log n)$ time and changes the orientation of
$O(\log n)$ edges.

\section{Proof of Theorem~\ref{thm:fast-dynamic-programming}}
\label{sec:algorithm}

Now we prove Theorem~\ref{thm:fast-dynamic-programming}.  We are given
an instance $(G_{in},\mathcal D_{in})$ of Steiner forest, and a branch
decomposition of $G_{in}$ of width $w$.  For simplicity of
presentation, we want to assume that each edge has length~1.  To
justify this assumption, let $\eta= {\epsilon}{\len(G_{in})}/(cm)$,
and define a new length assignment $\widehat{\len}(e) := \lfloor
\len(e)/\eta \rfloor$.  Now all the lengths are integers, and the sum
of lengths is at most $c\epsilon^{-1} m$.  Replace edge $e$ with
$\widehat{\len}(e)$ edgelets (if $\widehat{\len}(e)=0$ then contract
$e$) to achieve the assumption.  Given a solution for the modified
instance, the additional length due to rounding is at most $\eta m$,
which by definition of $\eta$ is less than $\epsilon
\OPT(G_{in},\mathcal D_{in})$.

\subsection{Reducing the height of the branch decomposition}

\begin{figure*}
\begin{framed}
\parbox[b]{0.6\linewidth}{\begin{tabbing}
\qquad\=\qquad\=\qquad\=\kill
$\textsc{Balance}(\mathcal C)$:\\
\textbf{Input}: a branch decomposition $\mathcal C$ with root cluster $C$.\\
\textbf{Output}: a balanced branch decomposition $\mathcal C'$ with root cluster $C$.\\
Let $H_1 \subsetneq H_2 \subsetneq \cdots \subsetneq H_k = C$ be a heavy path, that is,\\
\> a maximal ascending chain such that, for all $i \in \{1, \ldots, k - 1\}$,\\
\> the sibling $H_{i + 1} - H_i$ of $H_i$ satisfies $\lvert H_{i + 1} - H_i \rvert \le \lvert H_i \rvert$.\\
Let $C_1' = H_1$.
For $i \in \{2, \ldots, k\}$, let $C_i' = H_i - H_{i - 1}$.\\
For $i \in \{1, \ldots, k\}$, let $\mathcal C_i' = \textsc{Balance}(\{D : D \in \mathcal C,\:D \subseteq C_i\})$.\\
Let $\mathcal C' = \textsc{Complete}(\mathcal C_1', \ldots, \mathcal C_k')$.
\end{tabbing}}
\hfill\input{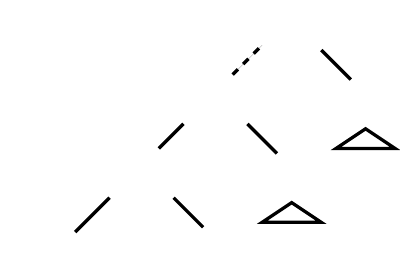_t}

\noindent
\parbox[b]{0.6\linewidth}{\begin{tabbing}
\qquad\=\qquad\=\qquad\=\kill
$\textsc{Complete}(\mathcal C_1, \ldots, \mathcal C_k)$:\\
\textbf{Input}: branch decompositions $\mathcal C_i'$ with root clusters $C_i'$.\\
\> Clusters $C_1', \ldots, C_k'$ are pairwise disjoint.\\
\textbf{Output}: a branch decomposition $\mathcal C' \supseteq \mathcal C_1' \cup \cdots \cup \mathcal C_k'$\\
\> with root cluster $C_1' \cup \cdots \cup C_k'$.\\
For $i \in \{1, \ldots, k\}$, let $m_i = \lvert C_i' \rvert$.\\
Let $m = m_1 + \cdots + m_k$.\\
Find (via binary search) the least index $j \in \{1, \ldots, k\}$\\
\> such that $m_1 + \cdots + m_j > m/2$.\\
If $j > 1$:\\
\> Let $\mathcal A = \textsc{Complete}(\mathcal C_1', \ldots, \mathcal C_{j - 1}')$.\\
\> Let $\mathcal B = \mathcal A \cup \{C_1' \cup \cdots \cup C_j'\} \cup \mathcal C_j'$.\\
Else:\\
\> Let $\mathcal B = \mathcal C_j'$\\
If $j < k$:\\
\> Let $\mathcal D = \textsc{Complete}(\mathcal C_{j + 1}', \ldots, \mathcal C_k')$.\\
\> Let $\mathcal C' = \mathcal B \cup \{C\} \cup \mathcal D$, where $C = C_1' \cup \cdots \cup C_k'$.\\
Else:\\
\> Let $\mathcal C' = \mathcal B$
\end{tabbing}}
\hfill\input{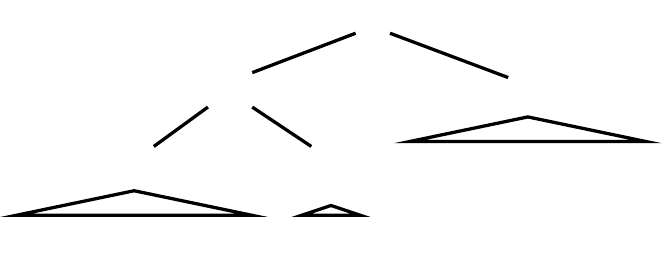_t}
\end{framed}
\caption{Algorithm for balancing a branch decomposition.}
\end{figure*}
\begin{lemma}
\label{lem:log-height-branchwidth}
Let $\mathcal C$ be a branch decomposition rooted at $C$ of width at most $w$.
The output $\mathcal C' = \textsc{Balance}(\mathcal C)$ is a branch decomposition rooted at $C$ of width at most $2 w$, and, for all edges $e \in C$, there exist at most $3 \log_2 m + 1$ clusters $D \in \mathcal C'$ such that $e \in D$.

There exist linear-time implementations of \textsc{Balance} and \textsc{Complete}.
\end{lemma}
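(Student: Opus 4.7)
The lemma makes three assertions: that $\mathcal C'$ is a branch decomposition rooted at $C$ of width at most $2w$; that every edge of $C$ lies in at most $3 \log_2 m + 1$ clusters of $\mathcal C'$; and that both \textsc{Balance} and \textsc{Complete} admit linear-time implementations. The plan is to prove each in turn, with the width and depth bounds proceeding by induction on $m$ along the recursive structure of the algorithm. That $\mathcal C'$ is a valid branch decomposition whose root cluster is $C$ follows directly from the structure of \textsc{Complete}, which produces a noncrossing family whose root is the union of the roots of its inputs.

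For the width bound, I will argue that every cluster of $\mathcal C'$ is either inherited from a recursive call to \textsc{Balance} on $\{D \in \mathcal C : D \subseteq C_i'\}$, for which the bound $2w$ follows by induction, or is a ``range cluster'' $C_l' \cup \cdots \cup C_r'$ introduced by \textsc{Complete}. The key observation is that any such range cluster rewrites as the set difference $H_r \setminus H_{l-1}$ of two consecutive clusters on the heavy chain of $\mathcal C$ (with the convention $H_0 = \emptyset$), so the inclusion $\partial(H_r \setminus H_{l-1}) \subseteq \partial H_r \cup \partial H_{l-1}$ combined with $|\partial H_r|, |\partial H_{l-1}| \leq w$ bounds its width by $2w$.

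For the depth bound, I will induct on $m$. Maximality of the heavy chain forces $H_1$ to be a leaf of $\mathcal C$, so $|C_1'| = 1$, and the heavy-child inequality $|C_i'| \leq |H_{i-1}|$ together with $|H_i| = |H_{i-1}| + |C_i'|$ yields both $|C_i'| \leq m/2$ for $i \geq 2$ and a doubling property $|C_i'| \leq |C_1'| + \cdots + |C_{i-1}'|$ on the piece-size sequence. For an edge $e \in C_{i^*}'$, the clusters of $\mathcal C'$ containing $e$ split into those of $\mathcal C_{i^*}'$ (at most $3 \log_2 |C_{i^*}'| + 1$ by the inductive hypothesis) and the ancestors of $C_{i^*}'$ in the \textsc{Complete} tree built at this level. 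For the latter, the weight-balanced choice of split index $j$ (the least index with $|C_1'| + \cdots + |C_j'| > M/2$) ensures that the weight of any sub-invocation of \textsc{Complete} still containing $C_{i^*}'$ is at most half its parent's weight, so at most $\log_2(m/|C_{i^*}'|)$ levels of \textsc{Complete} recursion contain $C_{i^*}'$, each contributing at most two new ancestor clusters. The doubling property provides a one-step saving at the stopping level (where $C_{i^*}'$ becomes the split index), which absorbs the additive constant and yields the bound $2 \log_2(m/|C_{i^*}'|)$. Combining and using $|C_{i^*}'| \leq m/2$ closes the induction at $3 \log_2 m + 1$.

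For linear-time implementation, I will precompute all cluster sizes in a single post-order pass, identify the heavy chain bottom-up in linear time, and generate the pieces $C_1', \ldots, C_k'$ in a single traversal. Within each \textsc{Complete} call the split index can be located in $O(\min(j-1, k-j))$ time by scanning simultaneously from both ends—the classical ``smaller half'' trick—so that the work inside a single \textsc{Complete} call is $O(k)$; summing across all \textsc{Balance} subcalls, which partition the edges of $\mathcal C$, yields linear total time. The principal obstacle will be the constant in the depth bound: a naive analysis of the \textsc{Complete} tree only gives $2 \log_2(m/|C_{i^*}'|) + O(1)$, leaving merely $O(1)$ slack to close the induction at $3 \log_2 m + 1$. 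Closing this gap relies on the doubling property surviving the recursion into $\mathcal A$ (prefix subsequences), together with careful handling of the boundary cases $j = 1$ and $j = k$, at which only a single new cluster rather than two is introduced per level of \textsc{Complete}.
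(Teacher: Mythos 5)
Your width argument and the overall inductive skeleton match the paper's, but the depth bound contains a genuine error. You claim that the number of clusters produced by \textsc{Complete} that strictly contain $C_{i^*}'$ is at most $2\log_2(m/m_{i^*})$, obtaining the saving of the additive constant from ``the doubling property at the stopping level.'' That claim is false, and the reason is precisely the case you do not address: the doubling property $m_i \le m_1+\cdots+m_{i-1}$ survives recursion into the \emph{prefix} $\mathcal A = \textsc{Complete}(\mathcal C_1',\ldots,\mathcal C_{j-1}')$, but it is destroyed by recursion into the \emph{suffix} $\mathcal D = \textsc{Complete}(\mathcal C_{j+1}',\ldots,\mathcal C_k')$, since $m_{j+1}$ need not be $1$ and need not be dominated by what precedes it. Concretely, take piece sizes $(1,1,2,4,2,1,7,1)$ with $m=19$ (this satisfies the doubling property, so it is a legitimate top-level call) and let $C_{i^*}'$ be the piece of size $7$. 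The top-level split is $j=5$, contributing one ancestor ($C$) and recursing into the suffix $(1,7,1)$; there the split lands on the size-$7$ piece with both a prefix cluster and a root above it, contributing two more. The total is $3$, but $2\log_2(19/7)\approx 2.88$. So the stopping level can contribute $2$ even though $m_{i^*} > M_{\mathrm{fin}}/2$, and your saving evaporates exactly when a suffix recursion occurs.

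The statement is still true, and your framework can be repaired, but you need different bookkeeping. The paper's proof tracks an explicit indicator $\delta(m_1,\ldots,m_k)$ recording whether the doubling property holds for the current subsequence, and proves the bound $3(\log_2 m - \log_2 m_i) + 2\delta$ on the number of strict ancestors: the coefficient $3$ per weight-halving is what absorbs the $+2\delta_{>j}$ incurred when doubling is lost in a suffix call (a suffix recursion contributes only $1$ ancestor and \emph{strictly} more than halves the weight, which pays for it). Note also that $3\log_2(m/m_{i^*})$ is all you actually need to close your induction at $3\log_2 m + 1$ --- your target of $2\log_2(m/m_{i^*})$ is stronger than necessary as well as false. (Alternatively, $2\log_2(m/m_{i^*})+1$ suffices and is provable by splitting into two cases: if the path to $C_{i^*}'$ uses only prefix recursions, doubling survives and your stopping-level saving is valid; if it uses at least one suffix recursion, that level itself contributes $1$ rather than $2$, which is where the saving comes from instead.) Separately, your linear-time argument is off: a simultaneous scan from both ends costs $O(\min(j-1,k-j))$ per split, and the recurrence $T(k)=T(a)+T(b)+O(\min(a,b))$ solves to $\Theta(k\log k)$ for balanced splits (e.g., all unit weights); to get linear time you need to locate the split by exponential search from both ends in $O(\log\min(j-1,k-j))$ time, after which the recurrence does telescope to $O(k)$.
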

\begin{proof}
For all clusters $D \in \mathcal C'$, either $D \in \mathcal C$, or there exist clusters $D_1, D_2 \in \mathcal C$ with $D_1 \supsetneq D_2$ such that $D = D_1 - D_2$.
In the latter case, since $\partial D \subseteq \partial D_1 \cup \partial D_2$ and $\lvert \partial D_1 \rvert, \lvert \partial D_2 \rvert \le w$, it follows that $\lvert \partial D \rvert \le 2 w$.

In order to prove the bound on the number of clusters containing a particular edge, we define a binary function $\delta = \delta(m_1, \ldots, m_k)$.
If $m_1 = 1$ and, for all $i \in \{2, \ldots, k\}$, it holds that $m_1 + \cdots + m_{i - 1} \ge m_i$, then $\delta = 0$.
Otherwise, $\delta = 1$.
The use of a heavy path in \textsc{Balance} ensures that every root invocation of \textsc{Complete} has $\delta = 0$.

The bound follows directly from this claim, which we prove by induction on $k$: for all inputs $\mathcal C_1, \ldots, \mathcal C_k$ to \textsc{Complete}, for all $i \in \{1, \ldots, k\}$, there exist at most $3 (\log_2 m - \log_2 m_i) + 2 \delta$ clusters $D \in \mathcal C'$ such that $D \supsetneq C_i$.
The base case $k = 1$ is trivial, since $\mathcal C' = \mathcal C_1$.
When $k > 1$, let $m_{< j} = m_1 + \cdots + m_{j - 1}$ and $m_{> j} = m_{j + 1} + \cdots + m_k$ and $\delta_{< j} = \delta(m_1, \ldots, m_{j - 1})$.
By the choice of $j$, it holds that $m_{< j} \le m/2$ and $m_{> j} < m/2$.
For all $i \in \{j + 1, \ldots, k\}$, it follows by the inductive hypothesis that there exist at most $1 + 3 (\log_2 m_{> j} - \log_2 m_i) + 2 \delta_{> j} < 3 (\log_2 m - \log_2 m_i)$ clusters $D \in \mathcal C'$ such that $D \supsetneq C_i$.
For all $i \in \{1, \ldots, j - 1\}$, there exist at most $2 + 3 (\log_2 m_{< j} - \log_2 m_i) + 2 \delta_{< j} < 3 (\log_2 m - \log_2 m_i) + 2 \delta$ clusters $D \in \mathcal C'$ such that $D \supsetneq C_i$, since $2 \delta_{< j} \le 2 \delta$.
Lastly, there exist at most $2 \le 3 (\log_2 m - \log_2 m_j) + 2 \delta$ clusters $D \in \mathcal C'$ such that $D \supsetneq C_j$, since if $\delta = 0$ then $m_j \le m/2$.
\end{proof}

\subsection{A framework for Steiner forest in graphs of bounded branchwidth}

In this section, we make explicit the dynamic programming framework of Bateni
et al., adapted to branch-decomposition instead of tree-decomposition.

\begin{figure*}
\centerline{\input{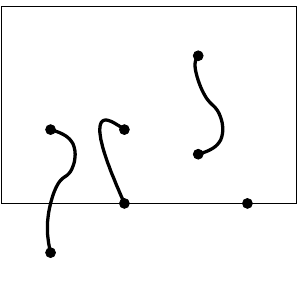_t}}
\caption{Active vertices. Consider cluster $C$ and demand pairs $\{ s_i,t_i\}$,
$\{ s_j,t_j\}$ and $\{ s_k,t_k\}$. Here, $s_i,s_j,t_j$ and $u$ are active
vertices of $C$ but $t_i,s_k,t_k$ are not.}
\label{figure:active}
\end{figure*}

\begin{Definition}
With respect to a cluster $C$, a vertex $u\in V(C)$ is \emph{active} if it
either belongs to $\partial C$ or participates in a demand $\{u, v\}\in
\mathcal D$ such that $v \notin V(C) - \partial C$ (see
Figure~\ref{figure:active}).  We use $\act(C)$ to denote the active vertices.
A demand $\{ u, v\}\in \mathcal D$ is \emph{active} if either $u$ or $v$ is
active.
\end{Definition}

\begin{Definition}
Given two partitions $P_1$ and $P_2$, let $P_1 \vee P_2$ denote  the finest
partition coarser than both $P_1$ and $P_2$.
\end{Definition}

\begin{Definition}
With respect to a cluster $C$, a \emph{configuration} $(\piin, \piout, \piall)$
is a triple consisting of partitions $\piin, \piout$ of $\partial C$ and a
partition $\piall$ of $\act(C)$  such that $$\piall|_{\partial C} =\piin \vee
\piout .$$ For a subgraph $F$, in the \emph{canonical configuration of $C$},
$\piin$ is the connectivity of $\partial C$ in $F \cap C$, $\piout$ is the
connectivity of $\partial C$ in $F- C$, and $\piall$ is the connectivity of
$\act(C)$ in $F$ (see Figure~\ref{figure:canonicalconfiguration}).  A subgraph
$F$ and a configuration $(\piin, \piout, \piall)$ are \emph{compatible} if
$\piin$ is the connectivity of $\partial C$ in $F \cap C$ and $\piall$ is the
connectivity of $\act(C)$ in $(F \cap C) \vee \piout$ (see
Figure~\ref{figure:compatible}).  Note that compatibility is determined by the
edges in $F\cap C$ only.
\end{Definition}

\begin{figure*}
\centerline{\input{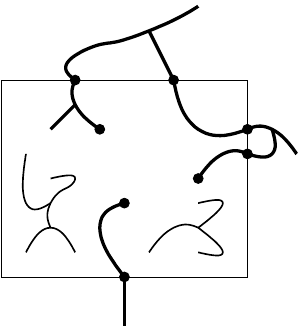_t}}
\caption{Canonical configuration. Consider the subgraph $F$ depicted above, and
assume that $a,b,c,d,e,f,g$ are the active vertices of cluster $C$. Then the
canonical configuration of $C$ for $F$ is: $\piin = \{ a\}, \{ b,c \}, \{ d\},
\{ e \}$, $\piout =\{ a,b \},\{ c,d \}, \{ e \}$, and $\piall =\{ a,b,c,d,g,h
\},\{ e,f \}$.  }
\label{figure:canonicalconfiguration}
\end{figure*}

\begin{figure*}
\centerline{\input{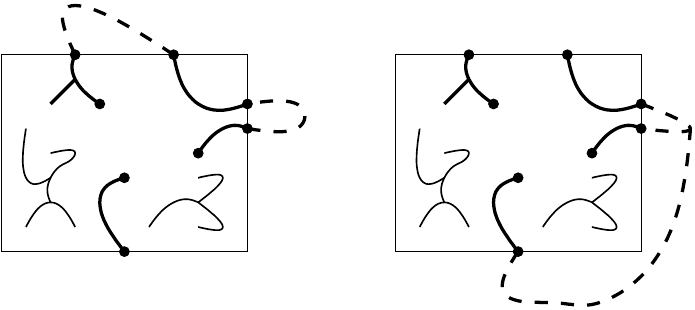_t}}
\caption{Compatibility of a subgraph with a configuration. If the set of active
vertices is $\{ a,b,c,d,e,f,g,h\}$, then the subgraph $F$ depicted above is
compatible with the configuration $\piin =\{ a\}, \{ b,c\}, \{ d\}, \{ e\}$,
$\piout=\{ a,b\},\{ c,d \}, \{ e\}$, and $\piall=\{ a,b,c,d,g,h\},\{ e,f \}$
(left side). It is also compatible with the configuration $\sigma^{in}=\piin$,
$\sigma^{out}=\{ a\}, \{ b \}, \{ c,d,e\}$, and $\sigma^{all}=\{ a,h\}, \{
b,c,d,e,f,g\}$ (right side).  }
\label{figure:compatible}
\end{figure*}

\begin{proposition} \label{prop:canonical-config}
With respect to each cluster $C$, a subgraph $F$ is compatible with its
canonical configuration.
\end{proposition}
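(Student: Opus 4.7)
The plan is to verify the two conditions of compatibility for the canonical configuration of $C$ for $F$. The first condition, that $\piin$ is the connectivity of $\partial C$ in $F\cap C$, holds by definition of the canonical configuration, so no work is needed there. Everything hinges on the second condition: that the connectivity partition of $\act(C)$ in $F$ coincides with the restriction to $\act(C)$ of $(F\cap C)\vee \piout$. Here I would read $F\cap C$ as its connectivity partition on $V(C)$, and extend $\piout$ from $\partial C$ to $V(C)$ with singleton blocks on $V(C)\setminus\partial C$ before taking the join.

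I plan to show the two partitions are equal by mutual refinement. One direction is immediate: each elementary equivalence in $(F\cap C)\vee\piout$ either comes from a path in $F\cap C$ or from a pair in $\partial C$ lying in the same $\piout$-block, and the latter pair is, by definition of $\piout$, connected by a path in $F-C$; concatenating these paths produces a walk in $F$. For the substantive direction, I would take a path $P$ in $F$ between two active vertices $u,v$ and decompose it into maximal runs each of whose edges lies entirely in $C$ or entirely outside $C$. The key observation is that any vertex where two consecutive runs meet is incident to edges on both sides, hence lies in $\partial C$. Runs inside $C$ contribute $F\cap C$-equivalences; runs outside $C$ contribute $\piout$-equivalences on their endpoints, which lie in $\partial C$. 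Chaining the equivalences yields a chain in $(F\cap C)\vee\piout$ linking $u$ to $v$.

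The one subtle point concerns the endpoints $u,v$, which need not lie in $\partial C$. The fix: if $u\in V(C)\setminus\partial C$, then every edge incident to $u$ is in $C$, so the first run of $P$ is forced to lie in $F\cap C$ and to begin at $u$; symmetrically for $v$. This is the only place where the hypothesis $u,v\in V(C)$ (built into being active) is actually used. Otherwise the argument is a straightforward unpacking of the definitions, and I do not anticipate any real obstacle.
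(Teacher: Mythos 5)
Your proof is correct; the paper in fact states this proposition without proof, treating it as immediate from the definitions. Your path-decomposition argument (splitting a path in $F$ into maximal runs inside and outside $C$, noting that the junction vertices lie in $\partial C$, and handling endpoints not in $\partial C$) is exactly the standard verification one would write out, so there is nothing to compare against and no gap to report.
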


\begin{Definition}
Let $C_0$ be a cluster with child clusters $C_1$ and $C_2$.  For $i \in \{0, 1,
2\}$, let $(\piin_i, \piout_i, \piall_i)$ be a configuration with respect to
$C_i$.  The configurations $(\piin_0, \piout_0, \piall_0)$, $(\piin_1,
\piout_1, \piall_1)$ and $(\piin_2, \piout_2, \piall_2)$ are \emph{compatible}
if all of the following conditions hold.
\begin{itemize}
\item $\piin_0 = (\piin_1 \vee \piin_2)|_{\partial C_0}$: the internal
connectivity of the parent is the join of the internal connectivity of the
children.
\item For $i \in \{1, 2\}$, it holds that $\piout_i = (\piout_0 \vee \piin_{3 -
i})|_{\partial C_i}$: the external connectivity of a child is the join of the
external connectivity of the parent and the internal connectivity of the other
child.
\item For $i \in \{0, 1, 2\}$, it holds that $\piall_i = (\piall_0 \vee
\piall_1 \vee \piall_2)|_{\act(C_i)}$.
\end{itemize}
\end{Definition}
See Figure~\ref{figure:compatibleconfigurations} for an example.

\begin{figure*}
\centerline{\input{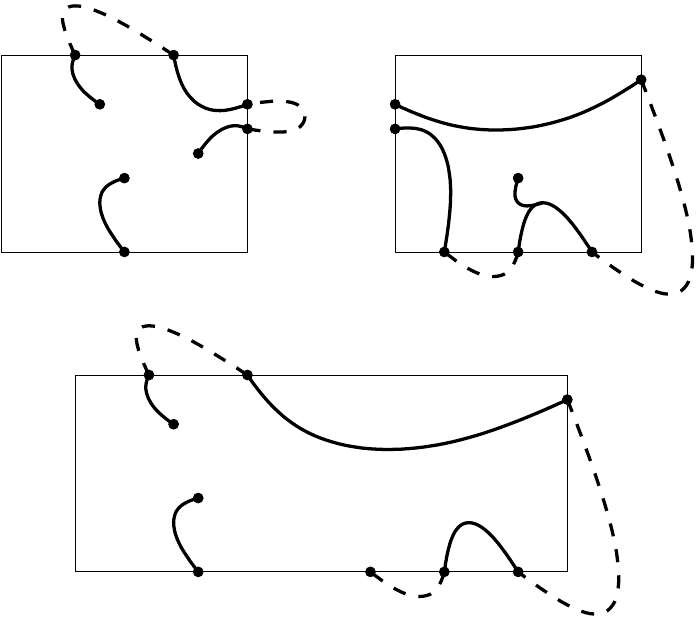_t}}
\caption{Compatible configurations. In the above example, assume that $g$ is
active for $C_1$ and $g'$ is active for $C_2$, but neither $g$ nor $g'$ are
active for $C_0$. The dashed connections represent $\piout_i$, the solid
connections determine $\piin_i$, and the union of solid and dashed edges
determine $\piall_i$.  }
\label{figure:compatibleconfigurations}
\end{figure*}

\begin{lemma} \label{lem:dp-complete}
Let $F$ be a subgraph.  Then, for every cluster $C_0$ with child clusters $C_1$
and $C_2$, the canonical configurations of $F$ with respect to $C_0$, to $C_1$
and to $C_2$ are  compatible.

Conversely, suppose that we have a configuration for each cluster, such that
for every cluster $C_0$ with child clusters $C_1$ and $C_2$, the configurations
for $C_0$, for  $C_1$ and for $C_2$ are compatible.  Then there exists a
subgraph $F$ such that the configurations are the canonical configurations of
$F$ with respect to the clusters.
\end{lemma}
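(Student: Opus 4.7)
The plan is to verify both directions by leveraging the decomposition $C_0 = C_1 \sqcup C_2$ of edge sets, which gives $F \cap C_0 = (F \cap C_1) \cup (F \cap C_2)$ and, for $i \in \{1,2\}$, $F - C_i = (F - C_0) \cup (F \cap C_{3-i})$. The underlying algebraic fact I would use throughout is that the connectivity partition of a union of edge sets on a vertex set equals the join of the individual connectivity partitions, treating vertices touched by no edges as isolated blocks.

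\emph{Forward direction.} Let $(\sigma_i^{\textup{in}}, \sigma_i^{\textup{out}}, \sigma_i^{\textup{all}})$ denote the canonical configuration of $C_i$ for $i \in \{0,1,2\}$. Compatibility condition~(1) follows by applying the join-of-connectivities observation to $F \cap C_0$ and restricting to $\partial C_0$, using the containment $\partial C_0 \subseteq \partial C_1 \cup \partial C_2$ (a short case analysis on the edges in $C_0$ incident to a boundary vertex of $C_0$). Condition~(2) is the analogous argument applied to $F - C_i$. Condition~(3) is the simplest: all three canonical $\piall$ partitions describe connectivity in the same subgraph $F$, so their join on $\act(C_0) \cup \act(C_1) \cup \act(C_2)$ and its restriction to $\act(C_i)$ recover $\sigma_i^{\textup{all}}$.

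\emph{Converse direction.} Build a candidate subgraph $F$ by reading off leaf configurations: a leaf is a single edge $e = uv$, and $e$ is placed in $F$ iff the leaf configuration places $u,v$ into the same block, which is determined by $\piin$ when both endpoints lie in $\partial\{e\}$ and by $\piall$ when a degree-one endpoint happens to be a terminal (otherwise the edge is irrelevant to every configuration in the tree and the choice can be made arbitrarily). We then verify by three successive inductions that $F$'s canonical configurations coincide with the given ones. First, induction from leaves upward shows canonical $\piin_C$ equals the given $\piin_C$: the base case is by construction of $F$, and the inductive step combines $F \cap C_0 = (F \cap C_1) \cup (F \cap C_2)$ with condition~(1). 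Second, induction from the root downward shows canonical $\piout_C$ equals the given $\piout_C$: at the root the boundary is empty and the match is trivial, and the step uses $F - C_i = (F - C_0) \cup (F \cap C_{3-i})$ together with condition~(2) and the already-matched $\piin_{3-i}$. Third, a parallel induction using condition~(3) and the definitional constraint $\piall|_{\partial C} = \piin \vee \piout$ pins down $\piall_C$ at every cluster.

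\emph{Main obstacle.} The delicate point is $\piall$, whose compatibility condition mixes information flowing from the parent and both children. The architectural trick is to verify $\piin$ and $\piout$ first, in the two natural directions, so that condition~(3) becomes a determining identity rather than a constraint to be satisfied. Checking that the leaf construction is well-defined in every boundary case (degree-one endpoints, non-terminal pendants, edges whose endpoints are irrelevant to every configuration) is a potential source of fiddling, but in each subcase either $\piin$ or $\piall$ at the leaf forces the inclusion decision or the decision is free.
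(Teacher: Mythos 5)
Your forward direction is sound, and your verification of $\piin$ (leaves up) and $\piout$ (root down) in the converse is also fine. The genuine gap is exactly at the point you flag as the ``main obstacle'': the third induction for $\piall$ does not go through, because condition~(3) is \emph{not} a determining identity for $\piall_C$ on active vertices outside $\partial C$, and the definitional constraint $\piall|_{\partial C}=\piin\vee\piout$ says nothing about those vertices. Neither end of your induction gives traction: at the root $\act(\cdot)=\emptyset$, and at a leaf the canonical $\piall$ depends on global connectivity in $F$, not on the leaf alone. In fact the converse is false under the stated hypotheses. Take the path $a\,e_1\,b\,e_2\,c\,e_3\,d$ with the single demand $\{a,d\}$, and the decomposition with clusters $\{e_1\},\{e_2\},C'=\{e_1,e_2\},\{e_3\}$ and the root. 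Then $\partial C'=\{c\}$ and $\act(C')=\{a,c\}$. Assign every cluster the all-singletons partitions for $\piin,\piout,\piall$, except set $\piall_{C'}=\{\{a,c\}\}$. Every per-cluster constraint and every triple is compatible: the spurious merge of $a$ with $c$ is invisible after restriction to $\act(\{e_1\})=\{a,b\}$ and to $\act(\{e_2\})=\{b,c\}$, and $a\notin\act(\mathrm{root})$ because $d$ lies inside the root. Yet no subgraph realizes these configurations: $\piall_{\{e_1\}}=\{\{a\},\{b\}\}$ forces $e_1\notin F$ (as $a$ has degree one), so $a$ is isolated in $F$ and the canonical $\piall_{C'}$ cannot merge $a$ with $c$. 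Your constructed $F$ will have the correct canonical $\piin$ and $\piout$ everywhere, and its canonical $\piall_C$ will always \emph{refine} the given $\piall_C$, but equality can fail.

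For what it is worth, the paper supplies no proof of this lemma, so you are not missing a published argument; the statement itself appears to need repair. The connectivity claims recorded by $\piall$ off the boundary are only ever cross-checked through $\partial C$, so to force realizability one must either add hypotheses that route every such claim through the boundary --- e.g.\ the ``outgoing'' requirement of Definition~4.5, under which every part of $\piall_C$ meets $\partial C$ and is therefore anchored to the already-verified $\piin\vee\piout$ --- or weaken the conclusion to what the dynamic program actually consumes (correct $\piin$ and $\piout$, plus the refinement relation for $\piall$). If you pursue the first repair, the place to insert it is precisely your third induction: with outgoingness, membership of a non-boundary active vertex in a part is witnessed by a boundary vertex of some descendant cluster, and condition~(3) then does propagate that witness downward. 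As written, though, the proof of the converse does not close.
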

The configurations essentially give a local representation of $F$.

We now show how to use the representation of $F$ with configurations  to
determine, with local conditions,  whether $F$ is a feasible Steiner forest
solution.
\begin{Definition} \label{definition:outgoing}
With respect to a cluster $C$, a connected component is \emph{outgoing} if it
intersects $\partial C$.  A configuration is {\em outgoing}  if  every part of
$\piall$ intersects $\partial C$.
\end{Definition}

\begin{proposition} \label{prop:active-outgoing}
If $F$ is a Steiner forest solution, then, with respect to a cluster $C$, for
each active vertex $u$, the tree of $F$ containing $u$ is outgoing.
\end{proposition}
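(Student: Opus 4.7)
The plan is a short case analysis on why $u$ is active, with the payoff coming from a single boundary-crossing observation. First, if $u \in \partial C$, then the tree of $F$ containing $u$ trivially meets $\partial C$, so it is outgoing. Otherwise, activeness forces the existence of a demand $\{u,v\} \in \mathcal D$ with $v \notin V(C) - \partial C$. Since $F$ is a feasible Steiner forest solution, $u$ and $v$ belong to the same tree $T$ of $F$. If $v \in \partial C$, we are already done; so the substantive case is $v \notin V(C)$.

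In that case, I will exhibit a vertex of $T$ that lies in $\partial C$ by walking along the unique $u$-to-$v$ path $P$ in $T$. Note that $u \in V(C)$ (since every active vertex lies in $V(C)$), while $v \notin V(C)$ means no edge of the cluster $C$ is incident to $v$. Hence $P$ cannot consist entirely of edges of $C$, for otherwise its endpoint $v$ would be in $V(C)$. Let $e = ww'$ be the first edge of $P$ (starting from $u$) that does not belong to $C$, with $w$ the endpoint nearer to $u$ on $P$.

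The key step is then to check that $w \in \partial C$ using the definition of boundary. The edge $e$ is incident to $w$ and is not in $C$. Moreover, $w$ has some incident edge in $C$: either the edge of $P$ preceding $e$ (which is in $C$ by minimality of $e$), or, if $w = u$, any edge of $C$ incident to $u$, whose existence follows from $u \in V(C)$. Thus $w$ has both an incident edge in $C$ and an incident edge outside $C$, which places $w$ in $\partial C$. Since $w$ lies on $P \subseteq T$, the tree $T$ containing $u$ meets $\partial C$ and is therefore outgoing.

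I do not expect any real obstacle: the only subtle point is the boundary-crossing argument, and it is immediate once one unpacks the definition of $\partial C$ as the set of vertices with both an incident edge in $C$ and an incident edge not in $C$. The rest is bookkeeping on the definition of active vertices and on the feasibility of $F$.
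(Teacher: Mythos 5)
Your proof is correct; the paper states this proposition without proof, treating it as immediate, and your argument (feasibility forces $u$ and its demand partner into one tree, and any $u$-to-$v$ path leaving the edge set $C$ must pass through a vertex with incident edges both in and out of $C$, i.e.\ a vertex of $\partial C$) is exactly the intended justification. The boundary-crossing step is handled cleanly, including the $w=u$ corner case, so there is nothing to add.
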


\begin{Definition} \label{definition:demand-consistent}
Let $C_0$ be a cluster with child clusters $C_1$ and $C_2$. Three
configurations for $C_0,C_1,C_2$ are \emph{demand-consistent}  if they are
compatible, outgoing, and if the following condition holds in addition: For all
demands $\{ s, t\}$ active for $C_1$ and $C_2$ but not $C_0$, terminals $s$ and
$t$ are related by $\piall_0 \vee \piall_1 \vee \piall_2$.
\end{Definition}

For example, in the example of Figure~\ref{figure:compatibleconfigurations},
the configurations are demand-consistent because terminal $g$ is related to
terminal $g'$: in $\piall_1$, $g$ is connected to $d$, and in $\piall_2$, $d$
is connected to $g'$. For an example where $\piall_0$ comes into play, see
Figure~\ref{figure:demandconsistent}.

\begin{figure*}
\centerline{\input{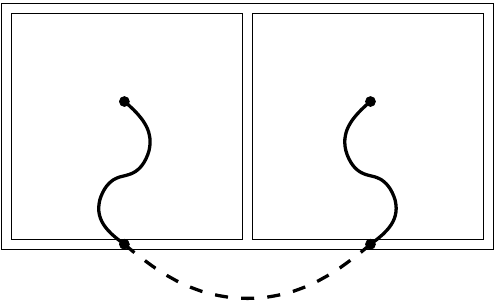_t}}
\caption{$s$ and $t$ are related by $\piall_0 \vee \piall_1 \vee \piall_2$: $s$
is connected to $u$ via $\piall_1$, $u$ is connected to $v$ via $\piall_0$, and
$v$ is connected to $t$ via $\piall_2$.  }
\label{figure:demandconsistent}
\end{figure*}

The following lemma establishes that a subgraph represented by its canonical
configurations is a feasible solution if and only if the configurations are
consistent with one another.
\begin{lemma} \label{lem:dp-sound}
A subgraph $F$ is a Steiner tree solution if and only if, for every cluster
$C_0$ with children $C_1, C_2$,  the canonical configurations of $F$ with
respect to $C_0, C_1$ and $C_2$ are demand-consistent.
\end{lemma}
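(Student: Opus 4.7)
The plan is to prove the two directions separately, using compatibility (Lemma~\ref{lem:dp-complete}) and the outgoing property (Proposition~\ref{prop:active-outgoing}) as scaffolding and reserving the real work for the demand-consistency condition.

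For the forward direction, assume $F$ is a feasible Steiner forest solution and consider a cluster $C_0$ with children $C_1, C_2$. The three canonical configurations are compatible by Lemma~\ref{lem:dp-complete}, and each is outgoing by Proposition~\ref{prop:active-outgoing} applied to each cluster in turn. The substantive task is the demand condition: for a demand $\{s,t\}$ active for $C_1$ and $C_2$ but not $C_0$, show that $s$ and $t$ are related by $\piall_0 \vee \piall_1 \vee \piall_2$. I would first unfold the definition of ``active'' to argue that both terminals must lie strictly inside $C_0$ (i.e., in $V(C_0) \setminus \partial C_0$). Feasibility of $F$ then yields a path $P$ from $s$ to $t$ in $F$, which I would decompose into maximal consecutive subpaths lying in $C_1$, in $C_2$, or in $G - C_0$. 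Each cut point of this decomposition is in $\partial C_1 \cup \partial C_2$ and is therefore an active vertex of the relevant cluster, so each subpath witnesses a connectivity relation within $\piall_0$, $\piall_1$, or $\piall_2$; chaining these relations along $P$ yields the required relation between $s$ and $t$.

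For the converse, assume the canonical configurations of $F$ are demand-consistent at every cluster and fix a demand $\{s,t\}$. At any leaf cluster $\{e\}$ with $s$ or $t$ as an endpoint (and that endpoint having some incident edge outside $e$), the demand is active; at the root cluster, where $\partial = \emptyset$ and both terminals lie strictly inside, it is inactive. Walking up from such a leaf, let $C_0$ be the smallest ancestor at which the demand becomes inactive. A short case analysis of when a demand can transition from active in a child to inactive in the parent shows that both children $C_1, C_2$ of $C_0$ are active for $\{s,t\}$: the only alternative is that both terminals already lay strictly inside one child, which contradicts the minimality of $C_0$. Applying demand-consistency at $C_0$ produces a chain $s = u_0, u_1, \ldots, u_k = t$ in which each consecutive pair is related by $\piall_0$, $\piall_1$, or $\piall_2$. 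Since each $\piall_i$ is by definition the connectivity of $\act(C_i)$ in the graph $F$ itself, consecutive pairs in the chain are connected in $F$, and transitivity gives a path from $s$ to $t$ in $F$, proving $F$ feasible.

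The main obstacle is the forward-direction decomposition: one must check carefully that every cut-point of $P$ is active in both adjacent subpath clusters, and that a subpath which leaves $V(C_0)$ entirely still re-enters through $\partial C_0 \subseteq \partial C_1 \cup \partial C_2$ so it provides a valid witness in $\piall_0$. The converse direction, once the critical cluster $C_0$ has been identified, reduces to a definition-chase through the meaning of $\piall_i$.
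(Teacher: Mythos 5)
Your forward direction is sound: compatibility comes from Lemma~\ref{lem:dp-complete}, outgoingness from Proposition~\ref{prop:active-outgoing}, and the path-decomposition argument (cut points lie in $\partial C_1 \cup \partial C_2$, and the exit/re-entry points of any excursion outside $C_0$ lie in $\partial C_0 \subseteq \partial C_1 \cup \partial C_2$) correctly produces a chain witnessing that $s$ and $t$ are related by $\piall_0 \vee \piall_1 \vee \piall_2$. You should still add a line verifying that $s$ and $t$ themselves lie in $\act(C_0)\cup\act(C_1)\cup\act(C_2)$, so that they belong to the ground set of the join; this does follow from the demand being active for both children and inactive for $C_0$, but it is not automatic.

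The converse has a genuine gap at the step ``the only alternative is that both terminals already lay strictly inside one child, which contradicts the minimality of $C_0$.'' Under the paper's definitions this is not a contradiction: a demand $\{s,t\}$ is declared active for a cluster $C$ as soon as one of $s,t$ is an active \emph{vertex} of $C$, and a vertex can be active because of a \emph{different} demand it participates in. Concretely, suppose every edge incident to $s$ or to $t$ lies inside some cluster $C^*$, so $s,t\in V(C^*)-\partial C^*$, but $s$ also belongs to a second demand $\{s,v'\}$ with $v'$ outside $C^*$. Then $s$ is an active vertex of $C^*$, so $\{s,t\}$ remains active at $C^*$, and the transition to inactivity occurs only at some ancestor $C_0\supsetneq C^*$ whose other child contains neither $s$ nor $t$ and hence is not active for $\{s,t\}$. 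In such a configuration the demand-consistency condition never fires for $\{s,t\}$ at any cluster, so your chain argument has nothing to apply to; one can in fact exhibit subgraphs $F$ whose canonical configurations pass every compatibility, outgoingness, and demand test and which nevertheless leave $s$ and $t$ disconnected. To close the converse you must either treat this case by a separate mechanism or adopt a per-demand notion of activity ($\{s,t\}$ is active for $C$ iff one terminal is in $V(C)$ and one is not in $V(C)-\partial C$), under which ``both terminals strictly inside a child'' genuinely contradicts activity at that child. As written, the case analysis is incomplete exactly where the lemma is delicate.
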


Now, suppose that for each cluster $\cal C$, we restrict attention to a subset
$\Pi_{\cal C}$ of the configurations of $\cal C$.  We call those configurations
{\em simple}.  Then the above setup leads to a dynamic program to find the
shortest Steiner forest $F$ such that for every cluster $\cal C$, the canonical
configuration of $\cal C$ for $F$ belongs to $\Pi_{\cal C}$.  The dynamic
program  works as follows.
\begin{center}
\fbox{\parbox{6in}{\vspace{-1.5ex}
\begin{tabbing}
\ \ \=\ \ \=\ \ \=\ \ \=\kill
For each cluster $C_0$ in bottom-up order,\\
\> if $C_0$ is a single edge $e$,\\
\>\> then the cost of a configuration is either $1$ or $0$\\
\>\>\> depending on whether $e$ needs to be in.\\
\> else, let $C_1$ and $C_2$ denote the two children of $C_0$;\\
\>\> for each configuration $\pi_0\in\Pi_{C_0}$,\\
\>\>\> $\text{cost} (\pi_0):= \min ( \text{cost}(\pi_1)+\text{cost}(\pi_2) )$,\\
\>\>\>\> where the min is over $\pi_1\in \Pi_{C_1}$ and $\pi_2\in \Pi_{C_2}$\\
\>\>\>\> such that $\pi_0,\pi_1,\pi_2$ are demand-consistent.
\end{tabbing}
\vspace{-1.5ex}}}
\end{center}
The runtime of that dynamic program is $O(n (\max_{\cal C} |\Pi_{\cal C}|)^3)$,
times the cost of checking that three configurations are demand-consistent.

Bateni et al.\ proposed a definition of $\Pi_{\cal C}$ such that $\max_{\cal C}
|\Pi_{\cal C}|=n^{\mbox{poly}(1/\epsilon)}$, and proved that, under their
definition, there exists a near-optimal forest such that for every $\cal C$,
the canonical configuration of $\cal C$ belongs to $\Pi_{\cal C}$. Here,
building on the spanner property, we propose a different definition of
$\Pi_{\cal C}$, such that $\max_{\cal C} |\Pi_{\cal C}|= O((\log\log
n)^{f(\epsilon)})$. This gives us the improvement from an inefficient to an
efficient approximation scheme.

In a nutshell, here is the idea: if the optimal forest has several trees that
come close to the same vertex $v$ of $\partial C$, then we connect them with
paths to transform them into a single tree; that simplifies the configuration.
We charge the length of that path to the total length of edges in the
neighborhood of $v$ in $\partial C$. To make sure that we do not charge the
same edges several times over, we contract edges that get charged.

\subsection{Contractions in a branch decomposition} \label{sec:contractions}

For a graph $G$ and a set $S$ of vertices, $SP(G, S)$ denotes the\footnote{For
uniqueness, assume the edges of $G$ are assigned distinct ID numbers, and
define $SP(G, S)$ to be the shortest-path forest in which ties are broken by ID
number.} shortest-path forest rooted at $S$. For a number $k$, define $SP(G, S,
k)=\set{e\in SP(G, S)\ : e \text{ is in a path of length at most $k$ starting
at } S}$.  That is, $SP(G, S, \eta)$ is the shortest-path forest for vertices
of $G$ whose distance from $\partial G$ is at most $\eta$.

Let $G$ be a graph with branch decomposition $\cal C$.  Fix a parameter
$\alpha$ whose value, a function of $\epsilon$ and of $w$, will be set later.
We define a recursive algorithm that operates on the clusters of $\cal C$. For
each cluster $C$ in bottom-up order, it computes a radius $\rho^C$, such that
paths of length $\rho^C$ are much shorter (by a factor of $\alpha$) than the
number of edges within distance $\rho^C$ of $\partial C$.  It then contracts
all edges within distance $\rho^C$ of $\partial C$.
\begin{center}
\fbox{\parbox{6in}{\vspace{-1ex}
\begin{tabbing}
def $\text{\sc Contract}_\alpha(C)$:\\
\quad \= $A^C = \bigcup_{C_i \text{ child of } C}  \text{\sc Contract}_\alpha(C_i)$\\
\> $\rho^C := \max \set{\rho\ :\ \len(SP(C/A^C, \partial C, \rho))
\geq \alpha \rho}$\\
\>$B^C =  SP(C/A^C, \partial C, \rho^C)\cup A^C $\\
\>return $B^C$
\end{tabbing}
\vspace{-1ex}}}
\end{center}

What are the structural properties achieved by the contraction algorithm?
First, the contracted graph $C/B^C$ has linear growth rate:
\begin{lemma} \label{lem:light-margin}
For every $\rho\geq 0$, $\len(SP(C/B^C, \rho)) \leq \alpha \rho$.
\end{lemma}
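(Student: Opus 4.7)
The plan is to prove this by relating the shortest-path forest in $C/B^C$ to the shortest-path forest in $C/A^C$, using the defining maximality of $\rho^C$. The intended reading of the statement is $\len(SP(C/B^C, \partial C, \rho)) \le \alpha \rho$, since $SP$ takes a rooting set as its second argument.

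First I would check a structural claim about the effect of the contraction. The edges contracted to go from $C/A^C$ to $C/B^C$ are exactly $SP(C/A^C, \partial C, \rho^C)$, and by definition this is the set of edges lying on a shortest path from $\partial C$ of length at most $\rho^C$. Hence every vertex of $C/A^C$ at distance at most $\rho^C$ from $\partial C$ is coalesced with $\partial C$ after contraction, and every vertex at distance $d > \rho^C$ in $C/A^C$ now sits at distance exactly $d - \rho^C$ from $\partial C$ in $C/B^C$ (the shortest path in $C/A^C$ still exists; any shorter path in $C/B^C$ would unfold to a shorter path in $C/A^C$, a contradiction). Consequently the edges of $SP(C/B^C, \partial C, \rho)$ correspond, under the contraction, to exactly those edges of $SP(C/A^C, \partial C, \rho^C + \rho)$ that are \emph{not} in $SP(C/A^C, \partial C, \rho^C)$, so
\[
\len(SP(C/B^C, \partial C, \rho)) = \len(SP(C/A^C, \partial C, \rho^C + \rho)) - \len(SP(C/A^C, \partial C, \rho^C)).
\]

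Next I would invoke the two inequalities implied by $\rho^C$ being the maximum value of $\rho'$ satisfying $\len(SP(C/A^C, \partial C, \rho')) \ge \alpha \rho'$. Maximality gives $\len(SP(C/A^C, \partial C, \rho^C)) \ge \alpha \rho^C$; and for any $\rho > 0$, the value $\rho^C + \rho$ lies strictly beyond the maximum, so $\len(SP(C/A^C, \partial C, \rho^C + \rho)) \le \alpha (\rho^C + \rho)$. Substituting these two bounds into the displayed equation above yields $\len(SP(C/B^C, \partial C, \rho)) \le \alpha \rho$, as required. The case $\rho = 0$ is immediate since both sides vanish.

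The main obstacle, to the extent there is one, is verifying the structural step that contracting $SP(C/A^C, \partial C, \rho^C)$ compresses the distance function from $\partial C$ by exactly $\rho^C$ for every vertex originally beyond that radius, so that the ``annulus'' characterization in the displayed equality actually holds. Minor care is also needed about whether the maximum defining $\rho^C$ is attained; since $\rho \mapsto \len(SP(C/A^C, \partial C, \rho))$ is a nondecreasing step function (edge lengths are integers, per the assumption made at the start of Section~\ref{sec:algorithm}) and is bounded by $\len(G)$, the supremum is achieved and the argument above is rigorous.
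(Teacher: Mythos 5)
Your proof is correct and is essentially the paper's own argument: both decompose $SP(C/A^C,\partial C,\rho^C+\rho)$ as the disjoint union of $SP(C/A^C,\partial C,\rho^C)$ and $SP(C/B^C,\partial C,\rho)$, then apply the two inequalities furnished by the definition and maximality of $\rho^C$. The only difference is that you explicitly justify the distance-telescoping property of contraction, which the paper states as an unproved observation.
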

\begin{proof}
Observe the following simple property of contractions: $SP(G, S,
\eta_1+\eta_2)$ is the disjoint union of $SP(G, S, \eta_1)$ and $SP(G/SP(G, S,
\eta_1), S, \eta_2)$.  Thus, $\len(SP(C/B^C,\partial C, \rho)) =
\len(SP(C/A^C,\partial C,\rho^C+\rho)) - \len(SP(C/A^C, \partial C, \rho^C)).$
By definition of  $\rho^C$, $SP(C/A^C, \partial C, \rho^C)$ has total length at
least $ \alpha \rho^C$.  By maximality of $\rho^C$,  the length of $SP(C/A^C,
\partial C, \eta^C+\rho)$ is less than $\alpha(\eta^C+\rho)$.  The lemma
follows.
\end{proof}
Second, the sum of all radii of contracted areas is small compared to the total
length of $G$:
\begin{lemma} \label{lem:contraction-cost}
$\sum_{C \in {\cal C}} \rho^C \leq {\len(G)}/{\alpha}$.
\end{lemma}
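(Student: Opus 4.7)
The plan is to combine the lower bound built into the definition of $\rho^C$ with an edge-disjointness argument across clusters. First, by the definition $\rho^C := \max \set{\rho\ :\ \len(SP(C/A^C, \partial C, \rho)) \geq \alpha \rho}$, the value attained at $\rho^C$ itself satisfies $\alpha \rho^C \leq \len(SP(C/A^C, \partial C, \rho^C))$. Summing this over all clusters reduces the problem to bounding $\sum_C \len(SP(C/A^C, \partial C, \rho^C))$ by $\len(G)$.

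The heart of the proof is therefore to argue that, viewed as subsets of the edges of $G$, the sets $SP(C/A^C, \partial C, \rho^C)$ are pairwise edge-disjoint as $C$ ranges over $\mathcal C$. I would verify this by a case analysis on the relationship between two clusters $C$ and $C'$ in the carving. If $C$ and $C'$ are incomparable (neither contains the other), then the edge sets of $C$ and $C'$ are disjoint by the definition of a branch decomposition, so certainly the respective $SP$ subsets are disjoint. If $C' \subsetneq C$, then $C'$ is contained in some child $C_i$ of $C$, so $B^{C'}$ is contained in $A^C$ by the recursive definition of $\textsc{Contract}_\alpha$; thus every edge in $SP(C'/A^{C'}, \partial C', \rho^{C'}) \subseteq B^{C'} \subseteq A^C$ is already contracted in $C/A^C$, hence cannot appear in $SP(C/A^C, \partial C, \rho^C)$. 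The case $C \subsetneq C'$ is symmetric.

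Given edge-disjointness, the sum $\sum_{C \in \mathcal C} \len(SP(C/A^C, \partial C, \rho^C))$ is bounded above by the total length $\len(G)$, and combining with $\alpha \rho^C \leq \len(SP(C/A^C, \partial C, \rho^C))$ gives $\alpha \sum_C \rho^C \leq \len(G)$, which is the desired bound. The main obstacle is purely bookkeeping: making sure the inductive/recursive definitions of $A^C$ and $B^C$ really do guarantee that an edge contracted while processing one cluster is already removed (rather than merely ``visible'') when processing any other cluster; once that is nailed down, the disjointness and the summation are immediate.
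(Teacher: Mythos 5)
Your proposal is correct and follows essentially the same route as the paper: both start from the defining inequality $\alpha\rho^C \le \len(SP(C/A^C,\partial C,\rho^C))$ and then observe that the contracted shortest-path forests at different clusters account for disjoint sets of edges of $G$. The paper packages the disjointness as the telescoping identity $\len(SP(C/A^C,\partial C,\rho^C)) = \len(B^C) - \sum_{C_i \text{ child of } C} \len(B^{C_i})$, which is just another phrasing of your nesting/incomparability case analysis.
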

\begin{proof}
By definition, $ \rho^C \le \len(SP(C/A^C,\partial C, \rho^C))/{\alpha}$. By
definition of $B^C$, $\len(SP(C/A^C,\partial C,
\rho^C))=\len(B^C)-\sum_{C_i\text{ child of }C}\len (B^{C_i})$. Summing over
clusters $C\in\cal C$ gives the lemma.
\end{proof}

What is the running time of the contraction algorithm? Let us explain in more
detail how to compute $\rho^C$ efficiently.  Here, for each vertex, $d[u]$
denotes the distance from $\partial C$ to $u$:
\begin{center}
\fbox{\parbox{6in}{\vspace{-2ex}
\begin{tabbing}
for $i := 1,2,3, \ldots,$\\
\quad \= $s_i := \len$\=$(\{uv \in SP(C/A^C, \partial C)\ :\ d[u] = i-1,$\\
\> \>                             $d[v]=i\})$\\
$\rho^C := \max \set{i\ :\ s_1+
\cdots + s_i \geq \alpha\, i}$
\end{tabbing}\vspace{-2ex}}}\\
\end{center}
The total time for each invocation $\text{\sc Contract}_\alpha(C)$ is linear in
the number of edges of $C$.  By Lemma~\ref{lem:log-height-branchwidth}, ${\cal
C}$ is a log-height branch-decomposition, so the total time for calling
$\text{\sc Contract}_\alpha$ on the root cluster of a graph $G$ of $O(n)$ edges
is $O(n \log n)$.

\subsection{Regions covering partially contracted clusters} \label{sec:regions}

Here is a high-level description of our method for finding regions for a
cluster $C$.  Fix a parameter $\beta$ whose value, a function of $\epsilon$ and
of $w$, will be set later.

\begin{center}
\fbox{\parbox{6in}{\vspace{-1.5ex}\begin{tabbing}
\quad\=\quad\=\quad\=\quad\=\kill
Denote by $2^{\mu^C}$ the minimum power of two\\
\> greater than $\max_{u \in \act(C)} \dist(u, \partial C)$.\\
For each $i=1,2,\cdots, \mu^C$,\\
\> define a set $\mathcal L_i$ of {\em regions} of $C/B^C$ such that\\
\>\> each region has diameter at most $\beta 2^i$, and\\
\>\> together, the regions of $\mathcal L_i$ cover $SP(C/B^C, 2^i)$.
\end{tabbing}\vspace{-1.5ex}}}
\end{center}

It follows from Lemma~\ref{lem:light-margin} that a greedy algorithm produces
such a covering, of size $|{\mathcal L_i}|=O(\alpha/\beta + \lvert \partial C
\rvert)$.  However, in order to get near-linear running time, we need an
algorithm that is slightly more sophisticated than greedy. In the rest of this
subsection, we present the details of the algorithm.

For each tree of $SP(C/B^C, \mu^C)$, let $v$ be a tree vertex that is on the
boundary $\partial C$ and let $\mathcal T_v$ denote the sequence of vertices
encountered on an Euler tour of  the tree.  An {\em $i$-region} is a subpath of
$\mathcal T_v^C$ of length at most $\beta 2^i$ whose first vertex is at a
distance at most $2^i+\beta 2^i$ from $\partial C$ in $C/B^C$.  In our
algorithm,  $\mathcal L_i$ is a set of $i$-regions.  For a vertex $u$ of $C$,
we use $d[u]$ to denote the $\partial C$-to-$u$ distance in $C/B^C$.

\begin{center}
\fbox{
\parbox{6in}{
\vspace{-1.5ex}
\begin{tabbing}
\quad\=\quad\=\quad\=\quad\=\quad\=\quad\=\quad\=\quad\=\quad\=\kill
for $i :=1,2,\ldots , \mu^C$, \\
\> $\mathcal L_i  := \emptyset $\\
for each tree of $SP(C/B^C, \mu^C)$,\\
\> root the tree at some vertex $v\in \partial C$\\
\> construct an Euler tour $T_v$ of the tree\\
\> for $i := \mu^C,\mu^C-1,\ldots, 1$,\\
\>\> $S_{v, i} := \{  j \beta 2^i\ :\    j \in \set{0,1,2,3,\ldots}$\\
\>\>\>\>\>\>\>\>\> $\text{and }  d[T_v[j\beta 2^i] ] \leq (1+\beta) 2^i   \}$\\
\>\> for each $x\in S_{v,i}$,\\
\>\>\> ${\mathcal L_i} :={\mathcal L_i}\cup{}$\\
\>\>\>\> $\{  \text{subpath of $T_v$ of length $\beta 2^i$}$\\
\>\>\>\> $\text{truncated at $|T_v|$) starting at   $T_v[x]$} \}$
\end{tabbing}
\vspace{-1.5ex}
} }
\end{center}

\begin{lemma}
For each $i=1,2,\cdots , {\mu^C}$, the regions of $\mathcal L_i$ cover
$SP(C/B^C, 2^i)$.
\end{lemma}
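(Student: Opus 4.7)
The plan is to show that every edge of $SP(C/B^C, 2^i)$ is contained in some region of $\mathcal{L}_i$, by exhibiting an explicit region.

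First I would fix an edge $e \in SP(C/B^C, 2^i)$. Since $2^i \le 2^{\mu^C}$, the edge $e$ lies in some tree $T$ of $SP(C/B^C, \mu^C)$, and so in the Euler tour $T_v$ of $T$ (rooted at some boundary vertex $v$); say $e$ joins $T_v[k]$ and $T_v[k+1]$ for some index $k$. The key fact from the definition of $SP(\cdot, \cdot, \cdot)$ is that $e$ lies on a shortest path from $\partial C$ of length at most $2^i$, so both endpoints of $e$ are at distance at most $2^i$ from $\partial C$ in $C/B^C$; in particular $d[T_v[k]] \le 2^i$.

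Next I would define $j := \lfloor k/(\beta 2^i)\rfloor$ so that $j\beta 2^i \le k < (j+1)\beta 2^i$, i.e.\ the distance along the Euler tour between positions $j\beta 2^i$ and $k$ is strictly less than $\beta 2^i$. Since successive entries of an Euler tour correspond to adjacent vertices in the underlying tree, the graph distance in $C/B^C$ from $T_v[j\beta 2^i]$ to $T_v[k]$ is at most this Euler-tour distance, hence at most $\beta 2^i$. Combining with the previous paragraph and the triangle inequality,
\[
d[T_v[j\beta 2^i]] \;\le\; d[T_v[k]] + \beta 2^i \;\le\; (1+\beta)\,2^i,
\]
which is exactly the condition for $j\beta 2^i$ to belong to $S_{v,i}$. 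Thus $\mathcal L_i$ contains the $\beta 2^i$-length subpath of $T_v$ starting at $T_v[j\beta 2^i]$. Because $j\beta 2^i \le k$ and $k+1 \le j\beta 2^i + \beta 2^i$, that region contains both endpoints of $e$, hence the edge $e$ itself.

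The main obstacle, modest as it is, is the verification that $j\beta 2^i \in S_{v,i}$, i.e.\ controlling $d[T_v[j\beta 2^i]]$; the argument above rests on the fact that the Euler tour encodes adjacencies of $T$, so the graph distance is dominated by the tour distance. A minor bookkeeping issue is that $\beta 2^i$ need not be an integer, but this is handled in the standard way by rounding indices; nothing substantive changes. The remaining edges of $SP(C/B^C, 2^i)$ are handled identically, completing the covering argument.
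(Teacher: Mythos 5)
Your argument is correct; the paper actually states this lemma without proof, and your derivation is the evidently intended one: the threshold $d[T_v[j\beta 2^i]] \le (1+\beta)2^i$ in the definition of $S_{v,i}$ is chosen precisely so that the triangle inequality $d[T_v[j\beta 2^i]] \le d[T_v[k]] + \beta 2^i$ (using that Euler-tour distance dominates tree distance) certifies that the block containing position $k$ is admitted. The only loose end, which you already flag, is the boundary bookkeeping when $\beta 2^i$ is not an integer (or when the edge straddles two consecutive blocks); since the regions are only ever used downstream to cover \emph{vertices} of $(T_j\cap C)/B^C$, your vertex-containment argument already suffices.
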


\begin{lemma} \label{lem:few-regions}
For each  $i=1,2,\cdots , {\mu^C}$, $\lvert \mathcal L_i \rvert \leq 2
\alpha(1+2\beta)\beta^{-1} + \lvert \partial C \rvert$.
\end{lemma}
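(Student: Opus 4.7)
The plan is to charge each region to a bit of edge length in a ball around $\partial C$, and then apply Lemma~\ref{lem:light-margin} to bound that edge length.

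First I would unpack the definition: for each tree of $SP(C/B^C, \mu^C)$ rooted at a boundary vertex $v$, the set $S_{v,i}$ picks out certain multiples of $\beta 2^i$ along the Euler tour $T_v$, and each such multiple contributes one region of tour-length $\beta 2^i$ (truncated only when it would run past the end of $T_v$). Since the tours of different trees are disjoint and the regions from the same tree start at distinct multiples of $\beta 2^i$, regions within a fixed tree are pairwise disjoint subpaths of $T_v$, and at most one of them (the last) can be truncated. Writing $\ell_{v,i}$ for the total tour-length of regions in $T_v$, this gives $\ell_{v,i} \ge (|S_{v,i}|-1)\beta 2^i$, and summing over trees yields $\sum_v \ell_{v,i} \ge (|\mathcal L_i| - t)\beta 2^i$, where $t$ is the number of trees. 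Since each tree is rooted at a distinct vertex of $\partial C$, $t \leq |\partial C|$.

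Next I would show that every region is contained in the ball of radius $(1+2\beta)2^i$ around $\partial C$ in $C/B^C$. By definition of $S_{v,i}$, the starting vertex of a region satisfies $d[T_v[j\beta 2^i]] \leq (1+\beta)2^i$. Any vertex visited within tour-distance $\beta 2^i$ of that start is at tree-distance at most $\beta 2^i$ from it (tree distance is bounded by tour distance along $T_v$), so its $d$-value is at most $(1+2\beta)2^i$. Consequently, every edge traversed by a region is an edge of the shortest-path forest whose endpoints all lie within distance $(1+2\beta)2^i$ of $\partial C$, hence an edge of $SP(C/B^C,(1+2\beta)2^i)$. Since an Euler tour uses each tree edge at most twice, $\sum_v \ell_{v,i} \le 2\len\bigl(SP(C/B^C,(1+2\beta)2^i)\bigr)$.

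Finally I would apply Lemma~\ref{lem:light-margin} with $\rho = (1+2\beta)2^i$ to get $\len\bigl(SP(C/B^C,(1+2\beta)2^i)\bigr) \le \alpha(1+2\beta)2^i$, and combine:
\[
(|\mathcal L_i| - |\partial C|)\,\beta 2^i \;\le\; \sum_v \ell_{v,i} \;\le\; 2\alpha(1+2\beta)2^i,
\]
which after dividing by $\beta 2^i$ is exactly the claimed bound. The only real pitfall is step two: I must handle the truncation of tail regions and the possibility of an edge being split by the start/end of a region carefully, which is why I buffer the ball by the extra $\beta 2^i$ and pay the additive $|\partial C|$ for the one truncated region per tree. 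Apart from that, the argument is a double-counting exercise against Lemma~\ref{lem:light-margin}.
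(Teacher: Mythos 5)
Your proof is correct and follows essentially the same route as the paper's: place every region inside $SP(C/B^C,(1+2\beta)2^i)$, bound the total tour-length of the regions by twice the length of that forest via Lemma~\ref{lem:light-margin}, and absorb the at-most-one truncated region per tree into the additive $\lvert\partial C\rvert$ term. You spell out the disjointness of regions within a tour and the $t\le\lvert\partial C\rvert$ accounting more explicitly than the paper does, but the argument is the same double-counting.
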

\begin{proof}
If a vertex $u$ is in a subpath of  $\mathcal L_i$, then it is at distance at
most $\beta 2^i$ from the starting point of the subpath, and so $u\in SP(C/B^C,
(1+2\beta)2^i)$.  So, the sum of lengths of the $i$-regions is at most the
length of the Euler tours of $SP(C/B^C, (1+2\beta)2^i)$, which is at most twice
the number of edges in $SP(C/B^C, (1+2\beta)2^i)$.  By
Lemma~\ref{lem:light-margin},  $|SP(C/B^C, (1+2\beta)2^i)|$ is at most $\alpha
(1+2\beta )2^i$.  Each $i$-region constructed from $T_v$ has length exactly
$\beta 2^i$,  except possibly the last one.   The lemma follows.
\end{proof}
The time for finding the covers is $O(n \log n)$.

\subsection{Simple configurations}

\begin{Definition}
Fix a cluster $C$ with boundary vertices $\partial C$ and, for each
$i=1,2,\cdots ,{\mu^C}$, a covering $\mathcal L_i$ of $C/B^C$ by $i$-regions.
Given
\begin{itemize}
\item an integer $d\leq |\partial C|$;
\item $d$ powers of $2$ in the range $[2^{\mu^C}/(2\gamma),2^{\mu^C}]$, where
$\gamma$ is a parameter to be determined later;
\item a priority ordering over those $d$ numbers, labelled
$(2^{i_1},2^{i_2},\ldots ,2^{i_d})$ by order of priority; and
\item for each $i_j$, a set of $i_j$-regions $\mathcal Q_j \subseteq \mathcal
L_{i_j}$;
\end{itemize}
consider the subpartition of $\act(C)$, denoted  $P(i_1, \ldots, i_d, \mathcal
Q_1, \ldots, \mathcal Q_d)$ , and defined by greedily setting the $j$th part to
be $$P_j = \bigl(\act(C) \cap \text{Uncontract}(\bigcup_{Q \in \mathcal Q_j}
Q\bigr)) - \bigcup_{\ell = 1}^{j - 1} P_\ell ,$$ where $\text{Uncontract}$
takes as input vertices of $C/B^C$ and outputs the corresponding vertices of
$C$.  A configuration $(\piin, \piout, \piall)$ is {\em simple} iff $\piall =
\piin \vee \piout \vee P(i_1, \ldots, i_d, \mathcal Q_1, \ldots, \mathcal Q_d)$
for some $(i_1, \ldots, i_d, \mathcal Q_1, \ldots, \mathcal Q_d)$.
\end{Definition}

To understand this definition intuitively, $d$ should be interpreted as the
number of outgoing trees. The $d$ powers of $2$ should be interpreted as the
approximate ``radii" of those trees -- maximum distance from $\partial C$ to an
active tree vertex. As in the algorithm of Bateni et al., the ordering should
be interpreted as giving priority to trees whose minimal enclosing cluster is
smaller. As in the algorithm of Bateni et al., the (uncontracted) $i_j$-regions
should be interpreted as a covering of the active vertices of the $j$th tree.

\begin{center}
\fbox{\parbox{6in}{\vspace{-1.5ex}
\begin{tabbing}
For each cluster $C$\\
\quad \= define a table indexed by $(i_1, \ldots, i_d, \mathcal Q_1, \ldots, \mathcal Q_d)$
\end{tabbing}
\vspace{-1.5ex}}}
\end{center}

\begin{lemma} \label{lem:number-gamma-simple-configs}
Taking $\alpha$ and $\beta$ to be constant, the number of simple configurations
is $(\log_2 \gamma)^{f(\epsilon)}$ for some function $f$ of $\epsilon$.  The
time to check demand-consistency is $O(\log n)$.
\end{lemma}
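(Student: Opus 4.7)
The plan has two parts: count the simple configurations, and describe a data structure for the consistency check.

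\textbf{Counting.} A simple configuration at cluster $C$ is specified by four pieces of data: (a) an integer $d$; (b) a tuple of $d$ powers of $2$, each chosen from the range $[2^{\mu^C}/(2\gamma),2^{\mu^C}]$; (c) a priority ordering of those $d$ numbers; and (d) for each $j \in \{1,\ldots,d\}$, a subset $\mathcal Q_j \subseteq \mathcal L_{i_j}$. Working with the balanced decomposition of Lemma~\ref{lem:log-height-branchwidth}, the width is at most $2w$, so $d \leq |\partial C| \leq 2w$. Each $i_j$ takes at most $1+\log_2 \gamma$ values, so (b) contributes $(1+\log_2\gamma)^d$. There are $d!$ orderings. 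With $\alpha,\beta$ constant, Lemma~\ref{lem:few-regions} gives $|\mathcal L_i| = 2\alpha(1+2\beta)/\beta + 2w = O(w)$, so each $\mathcal Q_j$ has $2^{O(w)}$ choices, and (d) contributes at most $2^{d\cdot O(w)}=2^{O(w^2)}$. Multiplying,
\[
(2w+1)\,(1+\log_2\gamma)^{2w}\,(2w)!\,2^{O(w^2)} \;=\; (\log_2\gamma)^{f(\epsilon)},
\]
where the terms independent of $\gamma$ are absorbed into $f(\epsilon)$ (since $w$ itself is a function of $\epsilon$).

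\textbf{Consistency check.} Given three simple configurations $(\pi_0,\pi_1,\pi_2)$ at a parent $C_0$ and its children $C_1,C_2$, the compatibility clauses on $\piin$ and $\piout$ only involve partitions of the boundaries $\partial C_i$, each of size $O(w)$, and so can be verified in time $O(w)$ independent of $n$. The key observation for the third compatibility condition is that each simple $\piall$ has at most $d+|\partial C|=O(w)$ nontrivial parts, one per region-union $\mathcal Q_j$ plus one per block of $\piin\vee\piout$. Hence $\piall_0\vee\piall_1\vee\piall_2$ has $O(w)$ nontrivial blocks, and we can compute it and verify outgoingness together with the restriction equalities using a union-find structure in $O(w\log n)$ time.

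\textbf{Demands.} The only place $\log n$ truly enters is the demand clause. For each demand $\{s,t\}$ \emph{completed at $C_0$}---that is, active in $C_1$ and $C_2$ but not $C_0$---we test whether $s$ and $t$ land in the same block of the computed join, an $O(\log n)$ union-find query. The apparent obstacle is that a single cluster might complete many demands; but each demand is completed at exactly one cluster (the lowest ancestor in the branch-decomposition tree containing both endpoints), so charging each demand's $O(\log n)$ cost to its completing cluster gives amortized $O(\log n)$ per triple. Combining with the $O(w)$ cost for the structural compatibility checks yields the claimed $O(\log n)$ bound per demand-consistency test.
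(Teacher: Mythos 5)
The paper states this lemma without giving any proof, so there is nothing to compare against line by line; judged on its own, your reconstruction is essentially what the lemma's statement demands and the counting half is correct in outline. Two remarks on the counting. First, a configuration is the triple $(\piin,\piout,\piall)$, and the simplicity condition only constrains $\piall$; you must also count the choices of $\piin$ and $\piout$, which contributes a factor of (Bell number of $2w$) squared. Since $w$ is a function of $\epsilon$ in the application, this is absorbed into $f(\epsilon)$ exactly as your other $\gamma$-free factors are, but it should be stated. Second, the hypothesis ``$\alpha,\beta$ constant'' should be read as ``constant given $\epsilon$'' (in the final proof they are set to explicit functions of $\epsilon$ and $w$), which is consistent with how you use Lemma~\ref{lem:few-regions}; fine.

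The consistency-check half correctly isolates the real difficulty — a single cluster may complete many demands — and your resolution (each demand is completed at exactly one cluster, so charge its $O(\log n)$ query there and read the bound as amortized per triple) is the right one and matches the accounting needed to get the claimed $n\log n(\log\log n)^{O(1)}$ DP time. The weak point is the assertion that the structural compatibility conditions, in particular $\piall_i = (\piall_0\vee\piall_1\vee\piall_2)|_{\act(C_i)}$ and outgoingness, can be verified ``using a union-find structure in $O(w\log n)$ time.'' These are equalities of partitions of $\act(C_i)$, a set of unbounded size, and the partitions are represented only implicitly by region sets living in \emph{different} contracted graphs ($C_i/B^{C_i}$ versus $C_0/B^{C_0}$) at different scales. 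To work at the level of the $O(w)$ parts you need a subroutine that, given a region $Q$ of a child's covering, reports which part of the parent's partition its uncontracted active vertices fall into (and detects if they fall into more than one), and similarly maps demand terminals to parts. You assert this costs $O(\log n)$ but give no mechanism; this is the one genuinely missing ingredient, though it is a data-structural gap rather than a flaw in the approach, and the paper itself leaves it equally unaddressed.
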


We can finally state the main structural Theorem that is at the core of
Theorem~\ref{thm:fast-dynamic-programming}.

\begin{theorem} \label{thm:DP-approximation}
For any solution $F$,  there exists a solution $F' \supseteq F$ such that for
every cluster $C$, the canonical configuration of $F'$ with respect to $C$ is
simple, and whose length satisfies: $\len(F') \le \len(F) + 4 \beta (2w - 1)
\bigl(1 + (3 \log_2 m + 1)/\gamma\bigr) \len(F) + 2 \alpha^{-1} (2w - 1) \len(G)
.$
\end{theorem}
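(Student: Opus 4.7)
The plan is to build $F'$ by adding three pools of edges to $F$, one for each term in the bound, with all additions localized to individual clusters. At a fixed cluster $C$, I enumerate the outgoing trees $T_1, \ldots, T_d$ of $F$ (so $d \le |\partial C| \le 2w - 1$ by the width bound). For each $T_j$, let $r_j$ be the maximum distance in $C/B^C$ from $\partial C$ to an active vertex of $T_j$, and let $2^{i_j}$ be the smallest power of $2$ at least $\max(r_j,\, 2^{\mu^C}/(2\gamma))$, capped at $2^{\mu^C}$. Since the active vertices of $T_j$ lie in $SP(C/B^C, 2^{i_j})$, the covering property of Section~\ref{sec:regions} lets me pick a set $\mathcal Q_j \subseteq \mathcal L_{i_j}$ of $i_j$-regions whose uncontracted union contains all active vertices of $T_j$. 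I order the trees by increasing $i_j$ so that smaller-radius trees have higher priority in the greedy definition of $P(i_1,\ldots,i_d,\mathcal Q_1,\ldots,\mathcal Q_d)$.

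The augmentation at $C$ then has three parts. First, \emph{intra-region Steiner connectors}: inside each region $Q \in \mathcal Q_j$ (which has diameter at most $\beta 2^{i_j}$ in $C/B^C$), add a short subtree linking the active vertices of $T_j$ falling in $Q$, plus a link to a boundary vertex of $Q$. Second, \emph{priority-conflict merges}: whenever an active vertex of $T_j$ falls inside a region $Q' \in \mathcal Q_{j'}$ assigned to a higher-priority tree $T_{j'}$, add a path of length at most $\beta 2^{i_{j'}} \le \beta 2^{i_j}$ inside $Q'$ merging $T_j$ and $T_{j'}$ into a single tree of $F'$; this ensures that the greedy partition $P$ is consistent with the actual tree structure of $F'$. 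Third, \emph{uncontracting edges}: the above paths live in $C/B^C$, so I include the corresponding edges of $B^C$ to realize them in $C$, at total per-cluster cost at most $|\partial C|\,\rho^C \le (2w-1)\rho^C$. One then checks that, at each cluster $C$, the canonical configuration of $F'$ equals the prescribed simple configuration with data $(i_1,\ldots,i_d,\mathcal Q_1,\ldots,\mathcal Q_d)$.

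For the cost bound, pools~(i) and~(ii) contribute at most $O(\beta)\sum_j 2^{i_j}$ per cluster, and I split the sum by whether $r_j \ge 2^{\mu^C}/(2\gamma)$ (\emph{normal} trees) or not (\emph{small} trees). For a normal tree, $2^{i_j} \le 2 r_j$ and, because $T_j$ reaches distance $r_j$ from $\partial C$ inside $C$, one has $r_j \le \len(T_j \cap C)$; summing over at most $2w-1$ trees per cluster and over clusters (each edge of $F$ belonging to a bounded number of clusters) yields the term $4\beta(2w-1)\len(F)$. For a small tree, $2^{i_j} \le 2^{\mu^C}/\gamma$, and since $2^{\mu^C}$ is essentially the diameter of the active part of $F$ at $C$, summing over the $3\log_2 m + 1$ levels of the balanced branch decomposition from Lemma~\ref{lem:log-height-branchwidth} produces the term $4\beta(2w-1)(3\log_2 m + 1)/\gamma \cdot \len(F)$. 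Pool~(iii) gives $\sum_C (2w-1)\rho^C$, bounded via Lemma~\ref{lem:contraction-cost} by $2\alpha^{-1}(2w-1)\len(G)$, the factor $2$ absorbing the round-trip around $\partial C$.

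The hard part is not the bookkeeping but the simultaneous simplicity of $F'$'s configurations at every cluster. An edge I add deep inside a cluster $C$ also appears in every ancestor $C'$, where it modifies the canonical configuration. The priority-conflict merges in~(ii) are precisely what makes this harmless: any chain of merges generated at $C$ collapses into a single tree that is still outgoing at $C'$, and because regions at $C'$ are at least as large as those at $C$, the covering property passes upward. Formalizing this inductive compatibility along the branch decomposition is the most delicate step, but it does not affect the constants, which are controlled by the two accounting bounds above.
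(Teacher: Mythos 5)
There is a genuine gap in the accounting for the main term $4\beta(2w-1)\len(F)$, and it traces back to your choice of priority order. You order the outgoing trees by increasing radius $i_j$ and then bound the cost per cluster by $O(\beta)\sum_j 2^{i_j}\le O(\beta)\sum_j \len(T_j\cap C)$, summing over clusters via the fact that each edge lies in $O(\log m)$ clusters. That sum is $O(\beta\log m)\len(F)$, not $O(\beta(2w-1))\len(F)$: a single tree $T_j$ is outgoing at up to $3\log_2 m+1$ clusters and can incur a charge of order $\beta\len(T_j)$ at each of them. The paper avoids this loss by ordering the trees so that $j<\ell$ whenever the minimal cluster enclosing $E(T_j)$ is a proper descendant of the one enclosing $E(T_\ell)$, and then charging each connection to the \emph{target} tree $T_j$. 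With that order, every tree $T_\ell$ that ever gets merged into $T_j$ (at any cluster) must cross the boundary of the fixed minimal cluster $C_j$ strictly enclosing $T_j$, so $T_j$ is charged at most $2w-1$ times \emph{globally}, giving $4\beta(2w-1)\len(F)$ with no $\log m$. Under your radius ordering this global argument is unavailable, and the extra $\log m$ is not cosmetic: it forces $\beta=\Theta(\epsilon/(w\log m))$, which inflates $\lvert\mathcal L_i\rvert=O(\alpha/\beta+\lvert\partial C\rvert)$ to $\Theta(\log m)$ and hence the number of region subsets $\mathcal Q_j$ to $m^{\Theta(1)}$, destroying the efficient-PTAS running time that the whole construction exists to achieve. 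The $(3\log_2 m+1)/\gamma$ factor in the theorem is reserved for the \emph{small} trees (those whose radius is capped at $2^{\mu^C}/(2\gamma)$), exactly because only for them can one afford the per-cluster, per-level accounting you use for everything.

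Two smaller issues. First, your pool (iii) ("include the corresponding edges of $B^C$ to realize the paths in $C$") is not a well-defined operation: a path in $C/B^C$ enters and leaves contracted supernodes, and realizing it in $C$ requires connecting entry and exit vertices \emph{inside} each contracted component; the paper does this with a separate top-down pass that adds $\partial C$-to-$\partial C$ connections of length at most $2\rho^{C}$ in $C/A^C$ (Lemma~\ref{lemma:secondstep}), which is where the $2\alpha^{-1}(2w-1)\len(G)$ term actually comes from. Second, you correctly identify that the simultaneous simplicity of the canonical configurations at all clusters is delicate, but you defer it; that step is not free, and in the paper it again leans on the second (boundary-reconnection) pass, not only on the region-covering monotonicity you invoke.
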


\subsection{Proof of Theorem~\ref{thm:DP-approximation}}

\subsubsection{Defining $F'$}

$F'$ is simply an extension obtained from $F$ by adding some edges.  First,
given $F$ and a cluster $C$, we define a partition of the form $P(i_1, \ldots,
i_d, \mathcal Q_1, \ldots, \mathcal Q_d)$.

Let $d$ be the number of outgoing trees of $F$.  Label these trees $T_1,
\ldots, T_d$ in a particular order, such that the following property holds: if
the minimal cluster including $E(T_j)$ is a proper descendant of the minimal
cluster including $E(T_\ell)$, then $j < \ell$.  We now define $i_j'  = \lceil
\log_2 \max_{u \in V(T_j) \cap \act(C)} \dist(u, \partial C) \rceil $ and $ i_j
= \max\{i_j',\:\mu - \lceil \log_2 \gamma \rceil - 1\}$ and let $\mathcal Q_j
\subseteq \mathcal L_j$ be a minimal set of regions such that $\bigcup_{Q \in
\mathcal Q_j} Q$ covers the vertices of $(T_j \cap C)/B^C$. This defines the
radii and sets of regions, hence also specifies the associated partition
$(P_1,\ldots ,P_d)$ of $\act(C)$.

The construction of $F'$ is in two steps. First we go from $F$ to $F_1$ by
adding some edges, then we go from $F_1$ to $F'$ by adding more edges.

First step: starting from $F_1:= F$, modify $F_1$ by processing clusters $C \in
\mathcal C$ in top-down order. Consider a cluster $C$.  While there exists a
$j$ and an active vertex $u$ that is in $P_j$ but is not connected to $T_j/B^C$
in $F_1/B^C$, add to $F_1$ a path of $C/B^C$ connecting $u$ to $T_j/B^C$.
(This first step is similar to the construction in Bateni el al.)

Second step: starting from $F':= F_1$, modify $F'$ by processing clusters $C
\in \mathcal C$ in top-down order. Consider a cluster $C$.  While there exists
a pair of boundary vertices $u, v \in \partial C$ such that  $u$ and $v$ are
not connected in $(F'\cap C)/A^C$ but can be connected by adding at most
$2\rho^C$ edges of $C/A^C$ to $F'$, add those edges to $F'$.

The result of this processing defines $F'$.

\subsubsection{The canonical configuration of $F'$ is simple}

\begin{lemma} \label{lemma:firststep}
After the first step of the construction, for every cluster $C$ we have: all
active vertices of $P_j$ are connected to $T_j$ in $(F_1\cap C)/B^C$.
\end{lemma}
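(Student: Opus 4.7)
My plan is to argue that Lemma~\ref{lemma:firststep} is essentially the termination condition of the inner while loop run at cluster $C$, once three supporting facts are verified: (a) the while loop at $C$ terminates, (b) the connecting path required by each iteration actually exists in $C/B^C$, and (c) later iterations of the outer (top-down) traversal do not destroy the property at $C$ once it has been established.

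For (a), I will observe that each iteration of the loop strictly decreases the finite quantity
$\sum_{j=1}^d \#\{u \in P_j : u \text{ not connected to } T_j/B^C \text{ in } F_1/B^C\}$,
so the loop must halt, and upon halting the stated property holds by negation of the loop guard. For (c), I will use the top-down processing order: once $C$ is processed, any subsequent modification of $F_1$ happens inside a descendant cluster $C'\subsetneq C$ and adds edges lying in $C'\subseteq C$. In $(F_1\cap C)/B^C$ such an edge either becomes a self-loop (when it lies in $B^C$) or appears as a genuine new edge; neither can disconnect anything, so connectivity in $(F_1\cap C)/B^C$ only improves as the construction proceeds. This will be the routine portion of the argument.

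The main point is (b), which is where the definitions of $P_j$ and $\mathcal Q_j$ get used, and I expect it to be the only step needing care. Suppose $u\in P_j$. By definition of $P_j$ there is a region $Q\in\mathcal Q_j$ whose uncontraction contains $u$. Because $\mathcal Q_j$ was chosen as a \emph{minimal} set of $i_j$-regions covering the vertices of $(T_j\cap C)/B^C$, the region $Q$ itself must contain at least one vertex of $(T_j\cap C)/B^C$; otherwise $\mathcal Q_j\setminus\{Q\}$ would still cover, contradicting minimality. Each $i_j$-region is, by the construction of Section~\ref{sec:regions}, a subpath of an Euler tour $T_v$ of a single tree of $SP(C/B^C,\mu^C)$. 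Consequently the contraction of $u$ and the chosen vertex of $T_j$ lie in a common tree of $SP(C/B^C,\mu^C)\subseteq C/B^C$ and are therefore connected in $C/B^C$ along tree edges. This is the path added by the loop body, which establishes (b) and completes the proof.
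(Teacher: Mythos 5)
The paper states Lemma~\ref{lemma:firststep} \emph{without proof}, so there is no official argument to compare against; your write-up supplies exactly the missing content, and it is correct. The decomposition into termination (a), existence of the connecting path (b), and monotonicity under the remaining top-down processing (c) is the natural one, and your key step (b) --- each $Q\in\mathcal Q_j$ must meet $(T_j\cap C)/B^C$ by minimality of the cover, and $Q$ is a subpath of an Euler tour of a single tree of $SP(C/B^C,\mu^C)$, so the image of $u$ and a vertex of $T_j$ are joined by a tree path in $C/B^C$ --- is precisely the fact the paper silently relies on later in the length analysis (``By definition, $Q$ covers at least one vertex $w\in T_j$''). One caveat worth a sentence in your proof: the while-loop guard as written tests connectivity in $F_1/B^C$, whereas the lemma asserts connectivity in $(F_1\cap C)/B^C$, so ``negation of the loop guard'' literally yields only the weaker global statement; a vertex $u$ could in principle be skipped because it is connected to $T_j$ via edges outside $C$. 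Since the paths your step (b) exhibits lie entirely inside $C/B^C$, the within-$C$ conclusion does follow once the guard is read (as the paper evidently intends, given how the lemma is applied to $(F_1\cap C)/B^C$ afterwards) as testing connectivity in $(F_1\cap C)/B^C$. This is an imprecision inherited from the paper rather than a gap in your reasoning, but your appeal to the loop guard should state which reading it uses.
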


\begin{lemma} \label{lemma:secondstep}
After the second step of the construction, for every cluster $C$ we have: if
two vertices of $\partial C$ are connected in $(F'\cap C)/B^C$ then they are
connected in $F'\cap C$.
\end{lemma}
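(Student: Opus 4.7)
The plan is to prove Lemma~\ref{lemma:secondstep} by induction on the clusters of the branch decomposition, processed bottom-up. For a leaf cluster (a single edge of $G$), both $A^C$ and $B^C$ are essentially trivial and the claim is immediate.

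For the inductive step at a cluster $C$ with children $C_1, C_2$, suppose $u, v \in \partial C$ are connected in $(F' \cap C)/B^C$ via a path $u = p_0, p_1, \ldots, p_k = v$ with edges $e_0, \ldots, e_{k-1}$ in $(F' \cap C) \setminus B^C$. I would first show that $u$ and $v$ are connected in $(F' \cap C)/A^C$. Since $B^C = A^C \cup SP(C/A^C, \partial C, \rho^C)$, each class $p_j$ is a connected component of $B^C$ in $V(C)$ and so is either an $A^C$-class whose projection in $C/A^C$ lies at distance greater than $\rho^C$ from $\partial C$ (a single vertex of $C/A^C$, call this type (a)), or the union of $A^C$-classes whose projection forms an SP-forest tree of radius at most $\rho^C$ in $C/A^C$ rooted at some $z \in \partial C$ (type (b)). Extracting the subsequence of type-(b) classes along the path with boundary representatives $z_{i_0} = u, z_{i_1}, \ldots, z_{i_\ell} = v$, between consecutive $z_{i_s}$ and $z_{i_{s+1}}$ one exhibits a walk in $C/A^C$ consisting of at most $\rho^C$ SP-forest edges to exit $p_{i_s}$, the $F'$-edges $e_{i_s}, \ldots, e_{i_{s+1}-1}$ on the segment (with intervening type-(a) classes contributing single vertices of $C/A^C$), and at most $\rho^C$ SP-forest edges to enter $p_{i_{s+1}}$. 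Only the $\le 2\rho^C$ SP-forest edges lie outside $F'$, so the stopping condition of the second step applied at $C$ forces $z_{i_s}$ and $z_{i_{s+1}}$ to be already connected in $(F' \cap C)/A^C$, and chaining yields $u, v$ connected there.

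To complete the induction, I would lift connectivity in $(F' \cap C)/A^C$ to connectivity in $F' \cap C$ using the inductive hypothesis at $C_1$ and $C_2$. Writing $A^C = B^{C_1} \cup B^{C_2}$, a walk from $u$ to $v$ in $(F' \cap C)/A^C$ decomposes into $F'$-edges interleaved with jumps inside single $A^C$-components; each such jump further decomposes into alternating $B^{C_1}$- and $B^{C_2}$-sub-paths whose transition vertices lie in $V(C_1) \cap V(C_2) \subseteq \partial C_1 \cap \partial C_2$. Where both endpoints of a sub-path lie in the corresponding $\partial C_i$, the inductive hypothesis replaces the $B^{C_i}$-connection with a genuine path in $F' \cap C_i \subseteq F' \cap C$, producing a $u$-to-$v$ walk in $F' \cap C$. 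I expect the main obstacle to be handling sub-path endpoints that are internal to a child (in $V(C_i) \setminus \partial C_i$), where the inductive hypothesis does not directly apply; however, every $G$-incidence of such a vertex lies in $C_i$, confining the surrounding $A^C$-substructure to $C_i$ and allowing either a slight strengthening of the inductive statement or a structural argument (using that $F \subseteq F'$ is a feasible Steiner forest) to close the case.
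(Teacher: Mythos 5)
Your proposal follows essentially the same route as the paper. The paper's proof likewise reduces the lemma to showing that vertices of $\partial C$ connected in $(F'\cap C)/B^C$ are connected in $(F'\cap C)/A^C$: it takes a minimal counterexample path $p$, notes that its endpoints lie in the radius-$\rho^C$ shortest-path trees of $C/A^C$ rooted at the two boundary vertices, and concludes that the at most $2\rho^C$ tree edges needed to complete the connection would have been added by the while-loop of the second step --- the same stopping-condition argument you run segment by segment along your chain of type-(b) classes, with minimality of $p$ playing the role of your explicit decomposition. The remaining step, lifting connectivity in $(F'\cap C)/A^C$ to genuine connectivity in $F'\cap C$ via the children, is precisely what the paper dispatches with its opening ``it suffices to show\dots''; the difficulty you flag with sub-path endpoints interior to a child is a real one, but the published proof does not address it either, so you have reproduced the paper's argument and honestly marked where it is left implicit rather than introduced a new gap of your own.
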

\begin{proof}
It suffices to show that, after those paths are added, if two vertices of
$\partial C$ are connected in $(F'\cap C)/B^C$ then they are connected in
$(F'\cap C)/A^C$.  Suppose not, and let $p$ be a $\partial C$-to-$\partial C$
path in $(F'\cap C)/B^C$ that is not a path in $(F'\cap C)/A^C$, chosen so as
to have a minimal number of edges.  Then, in $C/A^C$,  $p$ starts at some
vertex $p_{\text{start}}$ of $SP(C/A^C,u,\rho^C)$ and ends at some vertex
$p_{\text{end}}$ of $SP(C/A^C,v,\rho^C)$. Concatenating $p$ with a path from
$u$ to $p_{\text{start}}$ at one end, and with a path from $p_{\text{end}}$ to
$v$ at the other end, gives a path from $u$ to $v$ in $C/A^C$. The total number
of edges thus added is at most $2\rho^C$, so it would have been added during
the processing.
\end{proof}

\begin{lemma}
For every cluster $C$, the canonical configuration of $F'$ is simple.
\end{lemma}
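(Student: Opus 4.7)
The goal is to show that the canonical configuration $(\piin,\piout,\piall)$ of $F'$ with respect to $C$ satisfies $\piall = \piin \vee \piout \vee P$, where $P = P(i_1,\ldots,i_d,\mathcal Q_1,\ldots,\mathcal Q_d)$ is the subpartition chosen during the construction of $F'$. Setting $\sigma := \piin \vee \piout \vee P$, I plan to prove the two containments $\sigma \le \piall$ and $\piall \le \sigma$ separately; since each of $\piin,\piout,\piall$ is defined as a connectivity partition, this amounts to exhibiting walks in one direction and decomposing a given path in the other.

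For $\sigma \le \piall$, the $\piin$- and $\piout$-mergers are immediate from $F'\cap C,\ F'- C \subseteq F'$. For a $P_j$-merger $\{u,v\}$, Lemma~\ref{lemma:firststep} produces walks from $u$ and from $v$ to $T_j/B^C$ inside $(F_1\cap C)/B^C$, and Proposition~\ref{prop:active-outgoing} gives a boundary vertex $v_j \in T_j \cap \partial C$, so $u$ and $v$ are each connected to $v_j$ in $(F'\cap C)/B^C$. If $u \in \partial C$, I would apply Lemma~\ref{lemma:secondstep} to upgrade this to connectivity in $F' \cap C$. Otherwise $u$ is a terminal sitting in some outgoing tree $T_{\ell_u}$ of $F$, connected in $F \subseteq F'$ to a boundary vertex $v_{\ell_u}$, and Lemma~\ref{lemma:secondstep} applied to the pair $(v_{\ell_u},v_j) \subseteq \partial C$ closes the chain. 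A symmetric argument for $v$ then produces a $u$-to-$v$ walk in $F'$.

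For $\piall \le \sigma$, I would take a path in $F'$ connecting two active vertices $u,v$ and cut it at each $\partial C$-crossing. The resulting inside and outside segments have $\partial C$-endpoints related by $\piin$ and $\piout$ respectively; only the initial and final segments can be exceptional, namely when $u$ or $v$ is a terminal outside $\partial C$. Such a terminal, say $u$, lies in an outgoing tree $T_{\ell_u}$ of $F$, so $u \in \text{Uncontract}(\bigcup \mathcal Q_{\ell_u})$ and hence $u$ belongs to some $P_j$ with $j \le \ell_u$. Using Lemma~\ref{lemma:firststep} together with $u \in T_{\ell_u}$, step~1 of the construction glues $T_j/B^C$ to $T_{\ell_u}/B^C$ at $u$ in $(F_1\cap C)/B^C$, so $v_j$ and $v_{\ell_u}$ (both in $\partial C$) become $\piin$-related by Lemma~\ref{lemma:secondstep}; this wires $u$, through its $P_j$-relation with a boundary vertex of $T_j$, into the $\sigma$-chain formed by the internal segments.

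The main obstacle I anticipate is the degenerate case in which the boundary vertex $v_j \in T_j \cap \partial C$ is swallowed by an earlier part $P_{j'}$ (with $j' < j$), leaving $P_j$ with no $\partial C$ representative of its own through which to bridge to $\piin$. I plan to handle this by invoking the priority ordering: since $T_{j'}$'s minimal enclosing cluster is a proper descendant of $T_j$'s, step~1 still merges $v_j$ with $T_{j'}/B^C$ inside $(F_1\cap C)/B^C$, and Lemma~\ref{lemma:secondstep} then produces a $\piin$-relation between $v_j \in P_{j'}$ and the surviving boundary vertices of $T_{j'}$, effectively transferring the $\sigma$-bridge from $P_j$ through $P_{j'}$. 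Iterating over such absorptions along the (finite) priority order yields the desired $\sigma$-relation between $u$ and $v$ even when no $P_j$ directly contains a $\partial C$-vertex.
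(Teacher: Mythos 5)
Your core argument coincides with the paper's. The paper reduces the lemma to the single claim that every active vertex of $P_j$ is connected by $F'$ to $T_j$, and proves that claim exactly as in your $\sigma \le \piall$ direction: the vertex lies in some $T_\ell$ with $j \le \ell$, Lemma~\ref{lemma:firststep} connects it to $T_j$ in $(F_1\cap C)/B^C$, outgoingness supplies boundary vertices of $T_\ell$ and $T_j$, and Lemma~\ref{lemma:secondstep} upgrades their connectivity to $F'\cap C$, after which transitivity finishes. Where you diverge is that the paper compresses everything else into the opening phrase ``it suffices to show\dots'' and never argues the reverse containment $\piall \le \piin \vee \piout \vee P$, whereas you make it explicit via the path-decomposition at $\partial C$-crossings. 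That extra direction is genuinely needed for the equality demanded by the definition of \emph{simple}, and the ``degenerate case'' you flag --- an active vertex $u \in P_j$ whose part contains no boundary vertex of its own because the relevant $\partial C$-vertices were absorbed by earlier parts $P_{j'}$ --- is precisely the subtlety the paper's one-line reduction sweeps under the rug. Your proposed fix (chaining through the priority order) is the right idea, but as written it is still a sketch: you should verify that the absorption chain terminates at a part that \emph{does} retain a boundary vertex (or argue that $\piin$ alone bridges the gap once all the relevant trees have been merged in $F'\cap C$), since otherwise a part consisting only of non-boundary terminals would leave $\sigma$ strictly finer than $\piall$. In short: same machinery and same key lemmas, but you prove more than the paper writes down, and the one place your plan is not yet airtight is exactly the place the paper is silent.
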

\begin{proof}
It suffices to show for all $j$ that all active vertices in $P_j$ are connected
by $F'$ to $T_j$.  Let $u'$ be an active vertex in $P_j$.  Either in $F$ $u'$
belongs to $T_j$, in which case there is nothing to show, or $u'$ belongs to
some tree $T_\ell$ of $F$. Thus $u'$ is covered by some region of  $\mathcal
Q_{i_j}$ and by some region of $\mathcal Q_{i_\ell}$. The fact that $u'\in P_j$
indicates, by definition of $P_j$, that $j$ must be less than $\ell$.  Then, by
the first step of the construction (Lemma~\ref{lemma:firststep}), $u'$ get
connected to $T_j$ in $(F_1\cap C)/B^C$.  Since $T_\ell$ and $T_j$ are both
outgoing, in $F$ $u'$ is connected to some vertex $u\in \partial C$ and $T_j$
is connected to some vertex $v\in\partial C$.  By transitivity  $u$ is
connected to $v$ in $(F_1\cap C)/B^C$.  By the second step of the construction
(Lemma~\ref{lemma:secondstep}), $u$ is connected to $v$ in $F'\cap C$.  By
transitivity again, $u'$ is connected to $T_j$ in $F'\cap C$.
\end{proof}

\subsubsection{Length of $F'$}

First we analyze the length increase when gong from $F$ to $F_1$.  Since $u\in
P_j$, there exists a region $Q \in \mathcal Q_j$ such that $u \in Q$.  By
definition, $Q$ covers at least one  vertex $w \in T_j$.  Thus the path added
to connect $u$ to $T_j$ in $(F'\cap C)/B^C$ has length at most $2\beta
2^{i_j}$.  If $i_j = i_j'$, then the length is at most $4 \beta \len(T_j \cap
C)$.    In this case, we charge the length of this path to $T_j$.  Otherwise,
it's at most $4 \beta \len(F \cap C)/\gamma$.  In this case, we charge the
length of this path to $F\cap C$.

We claim that each tree $T_j$ of $F$ is charged at most $2w - 1$ times by paths
added when $i_j = i_j'$.  Indeed, whenever $T_j$ is charged, some other tree
$T_\ell$ of $F$ is connected to $T_j$ in $(F'\cap C)/B^C$.  After processing
descendant clusters of $C$ and adding connections between boundary vertices,
$T_j$ is connected to $T_\ell$ in $F'\cap C$.  Since $j<\ell$, the minimal
cluster $C_j$ strictly enclosing $T_j$ is either the same as for $T_\ell$ or
one of its descendant clusters.  Either way, $T_\ell$ must contain a vertex of
$\partial C_j$, so there are only at most $2w-1$ such trees $T_\ell$, so $T_j$
is charged at most $2w-1$ times.  Summing over trees $T_j$, the total charge of
those paths is at most $4 \beta (2w - 1) \len(F)$.

The length charged to $F\cap C$ is at most $4 \beta (2w - 1) (3 \log_2 m +
1)/\gamma$, since each edge is charged at most $(2w - 1) (3 \log_2 m + 1)$
times.

Second, we claim in the second step, going from $F_1$ to $F'$, each cluster $C$
is charged at most $2w-1$ times. Indeed, each charge corresponds to two boundary
vertices of $C$ being connected by a path, and after $2w-1$ paths are added, all
of $\partial C$ must be connected. Summing over clusters $C$, the total charge
of those paths is at most $\sum_{C \in \mathcal C} 2 (2w- 1) \mathbf \rho^C$,
which is at most $2 (2w- 1) \len(G)/\alpha$ by Lemma~\ref{lem:contraction-cost}.

This completes the proof of Theorem~\ref{thm:DP-approximation}.

\subsection{Proof of Theorem~\ref{thm:fast-dynamic-programming}}

Let $\gamma=(3\log_2 m+1)$, $\beta= \epsilon/8(2w-1)$, and
$\alpha=2\epsilon^{-1} c (2w-1)$ where $c$ is the constant specified in
Theorem~\ref{thm:fast-dynamic-programming}.
Lemma~\ref{lem:number-gamma-simple-configs} implies that the DP takes time $n
\log n (\log \log n)^{O(1)}$, which is $O(n \log^2 n)$.  By
Lemma~\ref{thm:DP-approximation}, the quality of the output satisfies
Theorem~\ref{thm:fast-dynamic-programming}.

\bibliographystyle{plain}
\bibliography{all}

\end{document}